\documentclass[12pt]{article}

\usepackage{amsmath,amssymb}
\usepackage{authblk}
\usepackage{enumitem}  
\usepackage{hyperref}  
\usepackage[usenames, dvipsnames]{color} 
\usepackage{graphicx}     
\usepackage{subcaption}
\usepackage[utf8]{inputenc}
\usepackage{url}
 \usepackage{xcolor}
\usepackage{cite}
\usepackage{esint}
\usepackage{amsmath, amsthm}

\newtheorem*{prop}{Proposition}
  \newtheorem*{theorem}{Theorem}

\hypersetup{
    bookmarks=true,         
    unicode=false,          
    pdftoolbar=true,        
    pdfmenubar=true,        
    pdffitwindow=false,     
    pdfstartview={FitH},    
    pdftitle={My title},    
    pdfauthor={Author},     
    pdfsubject={Subject},   
    pdfcreator={Creator},   
    pdfproducer={Producer}, 
    pdfkeywords={keyword1} {key2} {key3}, 
    pdfnewwindow=true,      
    colorlinks=true,       
    linkcolor=black,           
    citecolor=black,        
    filecolor=black,      
    urlcolor=black         
} 

\def\beq{\begin{equation}}
\def\eeq{\end{equation}}

\begin{document}

\numberwithin{equation}{section}
\title{\LARGE \bf Holographic Stirling engines \\
and the route to Carnot efficiency}

\author[1]{Nikesh Lilani}   
\author[2]{Manus R. Visser\thanks{manus.visser@ru.nl}}

\affil[1]{\it{Department of Applied Mathematics and Theoretical Physics, University of Cambridge, CB3 0WA, UK}}
\affil[2]{\it{Institute for Mathematics, Astrophysics and Particle Physics, 
and Radboud Center for Natural Philosophy, Radboud University,
6525 AJ Nijmegen, The Netherlands}}

 \date{\today}
 
\maketitle

\begin{abstract}
We compute the efficiency of the reversible Stirling engine, with and without
regeneration, for a broad class of working substances including Van der Waals
fluids, quantum ideal gases (Bose and Fermi), Bose-Einstein condensates,
thermal conformal field theories (CFTs), and holographic CFTs. Regeneration
acts as an internal heat recycling mechanism that enhances efficiency by
reducing the net heat exchange with external reservoirs.
\textcolor{black}{For regenerative Stirling cycles, we identify a general
sufficient condition for attaining Carnot efficiency: at fixed conserved
charges, the fixed-volume heat capacity is independent of volume. This ensures
that, at each temperature along the isochores, the heat released during cooling
equals the heat absorbed during heating. An ideal regenerator can therefore
store and return this heat reversibly, returning to its initial thermodynamic
state with no net change in energy or entropy.}
While this condition is satisfied for classical ideal gases and Van der Waals
fluids, it is violated for quantum ideal gases and CFT working substances.
\textcolor{black}{For thermal CFT states dual to AdS-Schwarzschild and
AdS-Reissner-Nordstr\"om black holes, we obtain exact expressions for the
Stirling efficiency. In the fixed-potential ensemble, both the regenerative
and non-regenerative efficiencies approach the Carnot value in the
large-potential limit. Regeneration reduces the leading deviation from the
Carnot value, although both efficiencies have corrections of the same
asymptotic order.}

\end{abstract}

\thispagestyle{empty}

\newpage
 
{ 
\small \tableofcontents
\thispagestyle{empty}
}
 
\section{Introduction}
\label{sec:intro}

Black holes provide one of the sharpest arenas in which gravitation, thermodynamics, and quantum theory intersect. If black holes are genuine thermodynamic systems, then they may be employed, at least in principle, in heat engines. The idea that black holes can participate in heat engine  processes originates in a sequence of thought experiments, beginning with Geroch's box, that exposed deep tensions between classical gravity and the laws of thermodynamics. In these early constructions, black holes do not serve as working substances, but rather as thermal reservoirs. Only in more recent developments have black holes themselves been promoted to working substances in well-defined thermodynamic cycles, and have been given a dual   interpretation via holography. We review these two  developments, Geroch's heat engines and holographic heat engines, in turn below.\\

\noindent \textbf{Geroch's black hole heat engine.} The starting point is a thought experiment proposed by   Geroch \cite{Gerochv2} in a 1971 Princeton colloquium, in which a procedure is described that appears to violate the Kelvin  version of the second law by converting heat into work with unit efficiency. According to Bekenstein's description in \cite{Bekenstein:1972tm,Bekenstein:1973ur}, and Wald's recollection~\cite{Wald:2018wro}, Geroch considered a box of total rest energy $m$ and temperature $T_{\rm h}$, which is slowly lowered using a string from infinity toward the horizon of a Schwarzschild black hole of mass~$M$. In this setup, the working system is the external matter in the box (typically  filled with thermal radiation), while the black hole acts as a   reservoir at temperature $T_{\rm c}$ into which heat can be deposited.

As the box is lowered quasi-statically, its energy as measured at infinity is redshifted according to
\begin{equation}
E (r) = m\chi(r)\,, \qquad \chi(r) = \sqrt{1 - \frac{2M}{r}}\,.
\end{equation}
By lowering the box arbitrarily close to the horizon, this energy can be made arbitrarily small, $E(r) \to 0$ as $r \to 2M$, so the agent lowering the box extracts an amount of work that approaches $m$. At this point, the box is allowed to emit radiation into the black hole, reducing its rest energy from $m$ to $m - \Delta m$. The box is then pulled back to infinity, requiring work $m - \Delta m$. The net work gained in the entire process therefore approaches $W \to m -(m - \Delta m)= \Delta m$.

Thus, an amount of heat $Q_{\rm in} = \Delta m$ has been completely converted into work, corresponding to an efficiency
\begin{equation}
\eta = \frac{W}{Q_{\rm in}} \to 1 \, .
\end{equation}
At the same time,  as the energy $E $ delivered to the black hole vanishes as $r \to 2M$,   the black hole mass is unchanged,  hence   the black hole returns to its original state in this idealized limit. In ordinary thermodynamics, according to Carnot's theorem, a reversible heat engine operating between temperatures $T_{\rm h}$ and $T_{\rm c}$ satisfies the Carnot efficiency $\eta_{\rm Carnot} = 1 - T_{\rm c}/T_{\rm h}$. Achieving $\eta \to 1$ therefore requires $T_{\rm c} \to 0$, indicating that, at the classical level, the black hole behaves as a reservoir of effectively zero temperature.

Bekenstein \cite{Bekenstein:1972tm,Bekenstein:1973ur} recognized that the Geroch process leads to an apparent violation of the second law of thermodynamics, even if the black hole itself is assigned an entropy. His introduction of black hole entropy was in fact motivated by a different thought experiment, suggested by his supervisor Wheeler \cite{Bekenstein1980PhysicsToday}, in which a hypothetical ``Wheeler's demon'' lowers a system carrying entropy into a black hole, causing it to disappear from the exterior universe. Once the black hole settles down to a stationary state, this entropy is no longer accessible to an external observer, since the black hole is characterized only by a small set of   parameters due to the no-hair theorem in general relativity. 

Bekenstein proposed that this apparent loss is resolved if the black hole itself carries entropy, and argued that this entropy should be proportional to the horizon area,
\begin{equation}
S_{\rm bh} \propto \frac{A}{\ell_P^2}\,,
\end{equation}
where $\ell_P$ is the Planck length. Moreover, he formulated the generalized second law of thermodynamics \cite{Bekenstein:1974ax}
\begin{equation}
 \Delta S_{\rm gen}\equiv  \Delta S_{\rm bh} + \Delta S_{\rm ext} \geq 0\,,
\end{equation}
where $S_{\rm gen}$ is the generalized entropy and $S_{\rm ext}$ is the entropy of the exterior universe.

However, the assignment of entropy to the black hole is not sufficient to resolve the Geroch paradox. As we saw above, the energy delivered to the black hole can be made arbitrarily small, so that the increase in black hole entropy cannot compensate for the entropy lost from the exterior. Bekenstein's resolution \cite{Bekenstein:1973ur,Bekenstein:1974ax} of this problem was to argue that the Geroch process cannot be carried out arbitrarily close to the horizon, implying a lower nonzero-bound  on the energy delivered to the black hole. He later refined his resolution \cite{Bekenstein:1980jp}: requiring that the resulting increase in black hole entropy be at least as large as the entropy $S$ of the system   lead him to a bound on the entropy  of a system with a given energy and size  
\begin{equation}\label{bekensteinbound}
S \leq 2\pi m R\,,
\end{equation}
This bound was   reformulated and proven in quantum field theory by Casini \cite{Casini:2008cr} as a consequence of the positivity of relative entropy for states reduced to the Rindler wedge, with $S$ interpreted as vacuum-subtracted entanglement entropy. However, Casini's bound is not needed to derive the generalized second law, which can instead be obtained from the monotonicity of relative entropy, as shown by Wall \cite{Wall:2010cj,Wall:2011hj}.

A different resolution of the Geroch paradox was provided by Unruh and Wald~\cite{Unruh:1982ic}, who argued that no independent entropy bound, such as \eqref{bekensteinbound}, is required to save the generalized second law once quantum effects are properly taken into account. For a static black hole in thermal equilibrium, a static observer outside the black hole sees a thermal bath of radiation at the locally measured Tolman temperature
$ 
T_{\rm loc}= T_{\rm H}/ \chi,
$ 
where \(T_{\rm H}\) is the Hawking temperature. Since the redshift factor $\chi$ is not constant, the thermal atmosphere has a nonzero gradient in the locally measured temperature, and hence in the pressure.

Consequently, when a box is slowly lowered toward the black hole, it experiences an effective buoyancy force arising from the surrounding thermal radiation, with a larger pressure exerted on the face of the box closer to the horizon. As a result, less work is delivered to infinity during the lowering process, and hence more energy is delivered to the black hole than in the classical analysis. The energy delivered is minimized not at the horizon, but at a floating point where the box is in equilibrium with the surrounding atmosphere. The location of the floating point is determined by   Archimedes' principle that  the energy of the box equals that of the displaced thermal radiation. When the contents of the box are released into the black hole at the floating point, the minimal increase in black hole entropy equals the entropy of the displaced thermal radiation. Since unconstrained thermal radiation maximizes entropy at fixed energy and volume, the entropy contained in the box cannot exceed that of the displaced radiation. Hence, the generalized entropy cannot decrease.

In this way, the apparent violation of the generalized second law is resolved by the thermal atmosphere surrounding the black hole. Unruh and Wald~\cite{Unruh:1983ir} extended their analysis to more general configurations, including thick boxes, and showed that the generalized second law holds in this process without the need to assume an independent entropy bound. These conclusions were subsequently challenged by Bekenstein in a series of critiques~\cite{Bekenstein:1983iq,Bekenstein:1993dz,Bekenstein:1999bh}. Pelath and Wald~\cite{Pelath:1999xt}  later demonstrated that, under the same assumptions about unconstrained thermal matter and the black hole atmosphere used in Bekenstein's analysis, the generalized second law can be established without introducing a separate entropy bound. See also the  reviews~\cite{SCIAMA1976385,Landsberg1992,Wald:1999vt,Anderson:1999gsl} on this topic, and we refer to   \cite{PhysRevD.43.340,Richterek,Curiel:2014zua,Bravetti:2015xsp,Prunkl:2019wdw} for alternative approaches to black hole heat engines.\\

\noindent \textbf{Holographic heat engines.} These early developments establish that black holes behave as consistent thermodynamic systems and can participate in heat-engine-like processes. A modern realization of this idea was developed by Johnson, who proposed to treat black holes themselves as working substances in heat engines within extended black hole thermodynamics. This framework is based on an extended version of the first law for Anti-de Sitter black holes where the cosmological constant $\Lambda$ is allowed to vary \cite{Kastor:2009wy,Dolan:2010ha,Dolan:2011xt,Cvetic:2010jb,Kubiznak:2014zwa}. The cosmological constant is then interpreted as a bulk pressure, $P_{\Lambda} = - \Lambda / (8\pi G)$, and its conjugate quantity in the extended first law defines the thermodynamic volume $V_\Lambda$. In this approach the black hole mass is identified with the thermodynamic enthalpy: $M= H= E+ P_\Lambda V_\Lambda.$ These definitions of pressure and volume   allow one to consider thermodynamic cycles in the $(P_{\Lambda},V_\Lambda)$-plane that extract work from AdS black holes used as the working substance.

In his original work \cite{Johnson:2014yja}, Johnson considered heat engines based on charged AdS black holes. For static black holes, a key simplification occurs: both the entropy and thermodynamic volume depend only on the horizon radius, so they are not independent variables. As a result, isochores coincide with adiabats. Consequently, the Carnot cycle (consisting of two isotherms and two adiabats) and Stirling cycle (consisting of two isotherms and two isochores) are identical in this framework and both attain the Carnot efficiency. A cycle that is not identical to the Carnot cycle is the rectangular $P_\Lambda V_\Lambda$-cycle  composed of isochores and isobars, whose efficiency Johnson evaluated in a high pressure and temperature approximation. Later  he \cite{Johnson:2016pfa} derived     an exact efficiency  formula for such rectangular cycles  in terms of differences of the black hole mass   evaluated at the corners of the cycle.

Subsequent work \cite{Chakraborty:2016ssb} placed these constructions on a more systematic footing by introducing a benchmarking scheme for black hole heat engines. In particular, a circular cycle in the $(P_{\Lambda},V_\Lambda)$-plane was proposed as a universal testbed, together with a discretization procedure that allows one to compute efficiencies for arbitrary cycles and compare different black hole working substances within a common framework.
More recently, \cite{Johnson:2019olt} reformulated these setups in the language of quantum heat engines, treating black holes explicitly as quantum working substances and analyzing standard engine cycles including the Brayton, Otto, and Diesel cycles, without modifying the   thermodynamic framework. Other extensions include higher-curvature corrections \cite{Johnson:2015ekr} and rotating black holes \cite{Hennigar:2017apu}, among many others; see \cite{Mann:2025xrb} for a comprehensive review of extended black hole thermodynamics, including heat engines. 

As noted by Johnson in his original work \cite{Johnson:2014yja}, his construction of heat engines can be interpreted within the framework of the Anti de Sitter/Conformal Field Theory (AdS/CFT) correspondence \cite{Maldacena:1997re,Gubser:1998bc,Witten:1998qj}. In particular, the bulk pressure is tied to the cosmological constant, so varying it corresponds to varying   the number of field degrees of freedom of the CFT (or the central charge). In this sense, a thermodynamic cycle in the bulk may be viewed as a trajectory through a family of dual field theories, similar to a renormalization group flow. This, however, raises a conceptual issue: the bulk pressure is not the thermodynamic pressure of the dual system, but instead parametrizes changes in the theory itself. As a result, the working substance does not evolve within a fixed quantum system, but rather moves in the space of theories, and therefore such cycles do not straightforwardly correspond to standard thermodynamic processes in the dual field theory \cite{Mancilla:2024spp,Borsboom:2026ash}.  

In contrast, in our recent construction of holographic heat engines \cite{Lilani:2025gzt}, the working substance is a thermal equilibrium state of a conformal field theory, and pressure and volume are defined in the standard thermodynamic sense intrinsic to the boundary theory. The heat engines are therefore formulated directly in the CFT, at fixed central charge, and their cycles correspond to ordinary quasi-static thermodynamic processes within a fixed quantum theory. In this setting, the equation of state is determined by scale invariance, leading to the conformal relation $E=(D-1)PV$. As a result, the efficiencies of a broad class of idealized, reversible cycles, including the Brayton, Otto, Diesel, and rectangular cycles, are universally fixed by the equation of state and are therefore independent of the field content of the CFT.

An important   difference from earlier constructions is that entropy and volume are independent thermodynamic variables. Consequently, the degeneracy between isochores and adiabats that arises in extended black hole thermodynamics is absent, and the Carnot and Stirling cycles are genuinely distinct. In particular, the efficiency of a Stirling cycle is non-trivial for a CFT: in the absence of regeneration, heat is exchanged along all four branches, and its efficiency cannot be determined solely from the equation of state in terms of the characteristic parameters temperature and volume. Instead, it depends on additional thermodynamic data, such as the functions $S(T,V)$ and $P(T,V)$. Universal behavior is recovered only in the high-temperature or large-volume regime. At finite temperature and volume   the Stirling efficiency depends on the subextensive coefficients appearing in the   large-temperature or large-volume expansion of the canonical free energy \cite{Lilani:2025gzt,Kutasov:2000td}.

Although our construction applies to thermal states in general CFTs, a particularly important regime arises when the state admits a dual description in terms of an asymptotically AdS black hole. In this case, the heat engine acquires a direct holographic interpretation, providing a conceptually cleaner realization of holographic heat engines. This regime corresponds to strongly coupled, large-$N$ CFTs, for which a precise holographic dictionary relates bulk and boundary thermodynamic quantities \cite{Gubser:1998bc,Witten:1998qj,Witten:1998zw,Visser:2021eqk}. Exploiting this dictionary for AdS-Schwarzschild black holes, we obtained a closed-form expression for the non-regenerative Stirling efficiency at finite temperature and volume. The resulting efficiency is lower than the planar (infinite-volume) limit. Moreover, comparing with weakly coupled CFTs, we found that the non-regenerative Stirling efficiency is likewise reduced in the holographic (strongly coupled) regime.\\

\noindent \textbf{Summary of results.} In the present work, we extend our previous analysis of CFT and holographic heat engines by incorporating regeneration into the Stirling cycle, which is in fact how Robert Stirling originally introduced the engine in 1816,\footnote{We thank Bill Unruh for pointing this out to us, and for pressing us to include regeneration.} whereas in our earlier work \cite{Lilani:2025gzt} we   considered only a non-regenerative variant. Regeneration is an internal heat recycling mechanism in which heat released during one part of the cycle is stored and later returned to the working substance, thereby reducing the net heat exchange with external reservoirs and increasing the efficiency. \textcolor{black}{By separating the heat exchanged with external reservoirs from the heat
internally recycled by the regenerator, we derive energy balance expressions
for regenerative Stirling cycles and state explicitly the additional condition
required for reversible regeneration: the heat released during cooling must
equal the heat absorbed during heating at each temperature
(see section~\ref{sec:2.1}).}

A central role in our analysis is played by the intrinsic mismatch between the
heat exchanged along the two isochoric branches of the Stirling cycle,
\begin{equation} \label{heatmismatch1}
Q_{\rm mis}
\equiv
\big|Q^{\rm out}_{2\to 3}-Q^{\rm in}_{4\to 1}\big|\,.
\end{equation}
\textcolor{black}{This quantity measures the net energetic difference between the
heat released during isochoric cooling and the heat required during isochoric
heating. It therefore determines the net residual heat that must be exchanged
with the external reservoirs in order to restore the energy balance of the
cycle. However, $Q_{\rm mis}=0$ alone does not guarantee reversible
regeneration, since equality of the total heats does not ensure that heat is
returned at the same temperatures at which it was stored.}

\textcolor{black}{We identify a simple and general sufficient condition for
reversible regeneration, namely that, at fixed conserved charges, the
fixed-volume heat capacity is independent of volume,}
\begin{equation} \label{conditionno1}
C_V(T,V,\mathcal Q_i)=C_V(T,\mathcal Q_i)\,,
\end{equation}
where the conserved charges $\mathcal Q_i$ are held fixed.
\textcolor{black}{Condition~\eqref{conditionno1} implies that the heat released
along the isochore $2\to3$ equals the heat reabsorbed along $4\to1$ at each
temperature. Hence $Q_{\rm mis}=0$, and the regenerator returns to its initial
thermodynamic state with no net change in energy or entropy. We then show that
the Stirling efficiency equals the Carnot efficiency (see the theorem in
section~\ref{sec:stirlingtheorem}):}
\begin{equation}
\eta_{\rm Stirling}^{\rm regen}
=
\eta_{\rm Carnot}
=
1-\frac{T_{\rm c}}{T_{\rm h}}\,.
\end{equation}
\textcolor{black}{This result can be understood directly from Carnot's theorem,
according to which all reversible engines that exchange heat only isothermally
with a hot and a cold reservoir, at temperatures $T_{\rm h}$ and $T_{\rm c}$,
respectively, attain the Carnot efficiency. Under
condition~\eqref{conditionno1}, the heat released by the working substance to
the regenerator at each temperature along $2\to3$ exactly matches the heat
reabsorbed from it at the same temperature along $4\to1$. The internal heat
exchange can therefore be carried out reversibly, and the only heat exchanged
with the two external reservoirs occurs along the isotherms. We emphasize that
the weaker condition $Q_{\rm mis}=0$ alone does not suffice: it balances only
the total energy given to and received from the regenerator and does not ensure
zero net entropy change.}

It is well known that the Carnot efficiency is attained for a regenerative
Stirling cycle if the working substance consists of an ideal gas
(in section~\ref{sec:idealgas} we review this textbook result to set the
stage), and we show that the same holds for the Van der Waals fluid
(see section~\ref{sec:waals}).

\begin{figure}[tbp]
    \centering
    \begin{subfigure}[t]{0.45\textwidth}
        \centering
        \includegraphics[width=\textwidth]{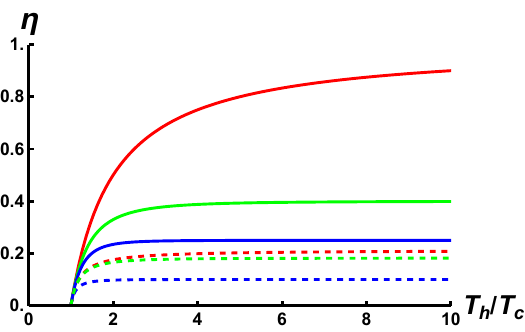} 
        \caption{Stirling efficiency $\eta$ vs. temperature ratio $T_{\rm h}/T_{\rm c}$ at fixed compression ratio, for an ideal gas (red), Bose-Einstein condensate (green), and CFT on a plane (blue), shown both in the presence (solid) and absence (dashed) of regeneration. $V_2/V_1 = 1.5$ and $D=4$ for this plot. The red solid line corresponds to the Carnot efficiency.}
        \label{fig:sub1}
    \end{subfigure}
    \hfill
    \begin{subfigure}[t]{0.45\textwidth}
        \centering
        \includegraphics[width=\textwidth]{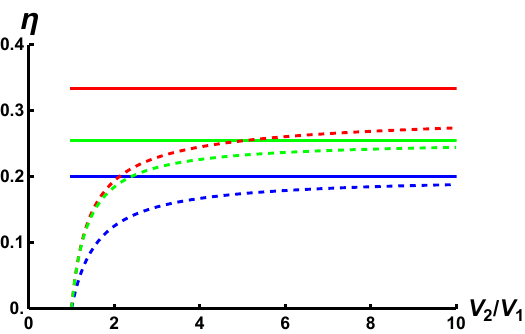} 
        \caption{Stirling efficiency $\eta$ vs. compression ratio $V_2/V_1$ at fixed temperature ratio, for an ideal gas (red), Bose-Einstein condensate (green), and CFT on a plane (blue), shown both in the presence (solid) and absence (dashed) of regeneration. $T_{\rm h}/T_{\rm c} = 1.5$ and $D=4$ for this plot. }
        \label{fig:sub2}
    \end{subfigure}

    \caption{ Comparison of Stirling efficiencies in the presence and absence of regeneration.}
    \label{fig:overallno1}
\end{figure}

 \textcolor{black}{For generic systems condition~\eqref{conditionno1} is not satisfied. Volume dependence of $C_V$ alone does not logically imply $Q_{\rm mis}\neq0$, because the integrated heat exchanges could cancel accidentally; however, our explicit calculations for quantum ideal gases away from the classical limit and for the neutral CFT working substances show that the isochoric heat exchanges do not match and that the regenerative efficiency remains below the Carnot value.}

We compute the Stirling efficiency explicitly for both regenerative and non-regenerative cycles, for quantum ideal gases (see section \ref{sec:quantumgases}), generic thermal CFTs (see section \ref{sec:thermalcfts}) and holographic CFTs (see section \ref{sec:holographiccfts}).  For quantum gases, we analyze both Bose and Fermi ideal  gases and show that, in the regenerative case, the deviation from Carnot efficiency is controlled by the fugacity~$z$, with the classical limit $z \to 0$ recovering Carnot efficiency. In the Bose case, this includes the regime of Bose-Einstein condensation (BEC), where the regenerative Stirling efficiency takes the simple form
\begin{equation}
\eta_{\rm regen}^{\rm BEC}
= \frac{1 - (T_{\rm c}/T_{\rm h})^{1 + d/2}}{1 + d/2} \, ,
\end{equation}
where $d$ is the number of spatial dimensions.

\textcolor{black}{For the neutral CFT working substances analyzed in section~\ref{sec:thermalcfts}, the explicit isochoric heat exchanges do not match and the Stirling cycle does not attain the Carnot efficiency. Their volume-dependent fixed-volume heat capacity explains why the sufficient condition~\eqref{conditionno1} is unavailable, but the conclusion follows from the explicit calculation rather than from volume dependence alone.} The Stirling efficiency simplifies in the high-temperature or planar limit, for instance in the regenerative case we obtain  
\begin{equation}
\eta^{\rm CFT}_{\rm regen} = \frac{1}{D}\left(1 - \left(\frac{T_{\rm c}}{T_{\rm h}}\right)^D\right)\,,
\end{equation}
where $D$ is the number of CFT spacetime dimensions. At finite temperature and volume we can compare the CFT Stirling efficiency at zero and infinite 't Hooft coupling. For example,  for $\mathcal{N} = 4$ SYM theory we find:
\begin{equation}\eta^{\text{regen}}_{\lambda =0} > \eta^{\text{regen}}_{\lambda \rightarrow \infty} > \eta^{\text{non-regen}}_{\lambda =0} > \eta^{\text{non-regen}}_{\lambda \rightarrow \infty}\,.\end{equation} Thus, we see that regeneration and weak coupling enhance the   CFT Stirling efficiency.

For holographic CFTs, we consider two bulk geometries, namely
AdS-Schwarzschild and AdS-Reissner-Nordstr\"om black holes. In the fixed
electric potential ensemble, both efficiencies approach the Carnot value
at large potential, with corrections of order
$\mathcal O(\tilde\Phi^{-1})$. The coefficient of the leading correction
is smaller in the presence of regeneration. This occurs outside the
assumptions of the theorem in section~\ref{sec:stirlingtheorem}, because
the electric charge varies along the fixed-potential cycle.

Finally, we compare the Stirling efficiencies, with and without regeneration, for the classical ideal gas, Bose-Einstein condensate, and CFT on the plane. As shown in Figure~\ref{fig:overallno1}, regeneration restores Carnot efficiency for the ideal gas and enhances the efficiency for BEC and CFT systems. Furthermore, the CFT efficiency is systematically lower than that of the BEC, which in turn is lower than that of the ideal gas, both as a function of the temperature ratio and the compression ratio.

\section{Stirling engine with and without regeneration}

\subsection{Reversible Stirling cycle and its  efficiency} 
\label{sec:2.1}
\noindent \textbf{Reversible heat engines.} A heat engine consists of four main components: a working substance (or system), a heat source, a heat sink and a work output device.   Moreover, an engine potentially also has a regenerator, a heat reservoir that  temporarily    stores and later expels heat to improve efficiency, which plays an important role in this article.   The essence of a heat engine is that the working substance   converts heat into work and operates in  a thermodynamic cycle. During the cycle, heat $(Q_{\text{in}})$ is supplied from the heat source to the working substance, part of which is converted into work $(W)$ done on the work output device, and the remainder heat $(Q_{\text{out}})$ is dumped into a heat sink. These three quantities are constrained by the first law of thermodynamics
\begin{equation}
Q_{\text{in}} = W + Q_{\text{out}}\,,
\end{equation} 
where we take $Q_{\text{in}} $ and $Q_{\text{out}}$ (with subscripts) to be positive by convention, and $W$ is positive if the working substance performs work and negative if work is done on the working substance. We consider heat sources and sinks consisting of one or more thermal reservoirs, which have a very large heat capacity so  that heat exchanges do not alter their temperature. 

The efficiency of a heat engine is defined as the ratio of the work performed by the system and the heat put into the system
\begin{equation} \label{efficiency}
    \eta = \frac{W}{Q_{\text{in}}} = 1- \frac{Q_{\text{out}}}{Q_{\text{in}}}\,,
\end{equation}
where the first law was used in the second equality.  \textcolor{black}{We assume that the cycle consists of reversible thermodynamic processes.
In particular, the processes are quasi-static, so that the working substance
remains in equilibrium throughout, and have zero entropy production.
For a quasi-static process involving only pressure-volume work and generalized
work terms associated with the quantities $\mathcal Q_i$, the first law takes
the differential form
\begin{equation}
    dE=\delta Q-P\,dV+\sum_i \Psi_i\,d\mathcal Q_i\,,
\end{equation}
where $\delta Q$ (without subscript in/out) is positive when it is added to the system and negative when it leaves, 
 and
$\Psi_i$ is the generalized potential conjugate to $\mathcal Q_i$. Throughout
most of this paper, except for
section~\ref{subsec:stirling_equal_to_carnot}, the quantities $\mathcal Q_i$
are held fixed. For a reversible process, the Clausius relation is
\begin{equation}
    \delta Q_{\rm rev}=T\,dS\,,
\end{equation}
so that $Q_{\rm rev}=\int T\,dS$ for a finite path, which reduces to
$Q_{\rm rev}=T\Delta S$ for an isothermal process.
In this article, we study the Stirling engine for different kinds of working
substances.}\\

\noindent \textbf{Stirling cycle.} We first define a Stirling cycle in abstract terms, and then describe   more concrete realizations  of a Stirling engine (see \cite{Walker1980,UrieliBerchowitz1984,FinkelsteinOrgan2001,Senft2007} for standard references). A Stirling cycle is a closed thermodynamic cycle that consists of four paths. We label the vertices of the four paths by $i=1,2,3,4$. The thermodynamic processes along the four paths of a Stirling cycle are: $1 \to 2 $ and $3 \to 4$ are isotherms, i.e. they have the same temperature  $T_1 = T_2$ and $T_3 = T_4$, and paths $2 \to 3$ and $4 \to 1$ are isochores, i.e. they have constant volume $V_2=V_3$ and $V_4 = V_1$.  Unlike internal heat engines, such as the gasoline and diesel engine,  where heat is generated inside the working fluid, the Stirling engine is an external heat engine, like the steam engine: 
the heat that drives the cycle is supplied from outside the engine's working fluid and crosses the system boundary during heat transfer. The working substance of a  Stirling engine typically consists of air or another gas.

We define two types of Stirling engines: with and without a regenerator. 
Historically, the 1816 patent of Robert Stirling~\cite{Stirling1816} introduced 
a closed-cycle air engine,  operating on what became known as the 
Stirling cycle, together with the ``economiser,'' a heat-exchange device 
designed to improve efficiency by storing and reusing heat within the cycle \cite{Walker1980,FinkelsteinOrgan2001}. 
This component, now known as the regenerator (due to Ericsson), was subsequently refined and 
incorporated into practical engine designs, including those developed in 
collaboration with his brother James Stirling. For convenience, albeit with 
some historical imprecision, we refer to engines lacking a regenerator as 
nonregenerative Stirling engines. Regeneration refers to the internal thermal 
process in which heat is temporarily stored during one part of the cycle and 
returned in another, without exchange with an external reservoir. As a result 
of this internal recycling of heat, the required external heat input is reduced 
for a given cycle. In a Stirling engine, heat exchange with the regenerator 
occurs during the isochoric stages of the cycle.

There are three common mechanical realizations of the Stirling engine: the alpha, beta, and gamma configurations \cite{UrieliBerchowitz1984}.  In the \emph{alpha-type} engine, the working gas is contained in two separate cylinders connected by a passage that may include a regenerator. One cylinder is maintained at the hot temperature $T_{\rm h}$ and the other at the cold temperature $T_{\rm c}$, each containing a piston coupled to a crankshaft. In the \emph{beta-type} engine, a power piston and a displacer are located in the same cylinder. The displacer does not produce work. Rather, it moves the working gas between the hot and cold regions of the cylinder, typically through a regenerator, while the power piston converts the resulting pressure changes into mechanical work. In the \emph{gamma-type} engine, the displacer and the power piston are placed in separate cylinders that are connected by a passage containing the working gas. The displacer cylinder contains the hot and cold regions and moves the gas between them, while the power piston cylinder extracts work from the pressure changes in the gas.

\begin{figure}[t!]
    \centering
    \includegraphics[width=1.1\textwidth]{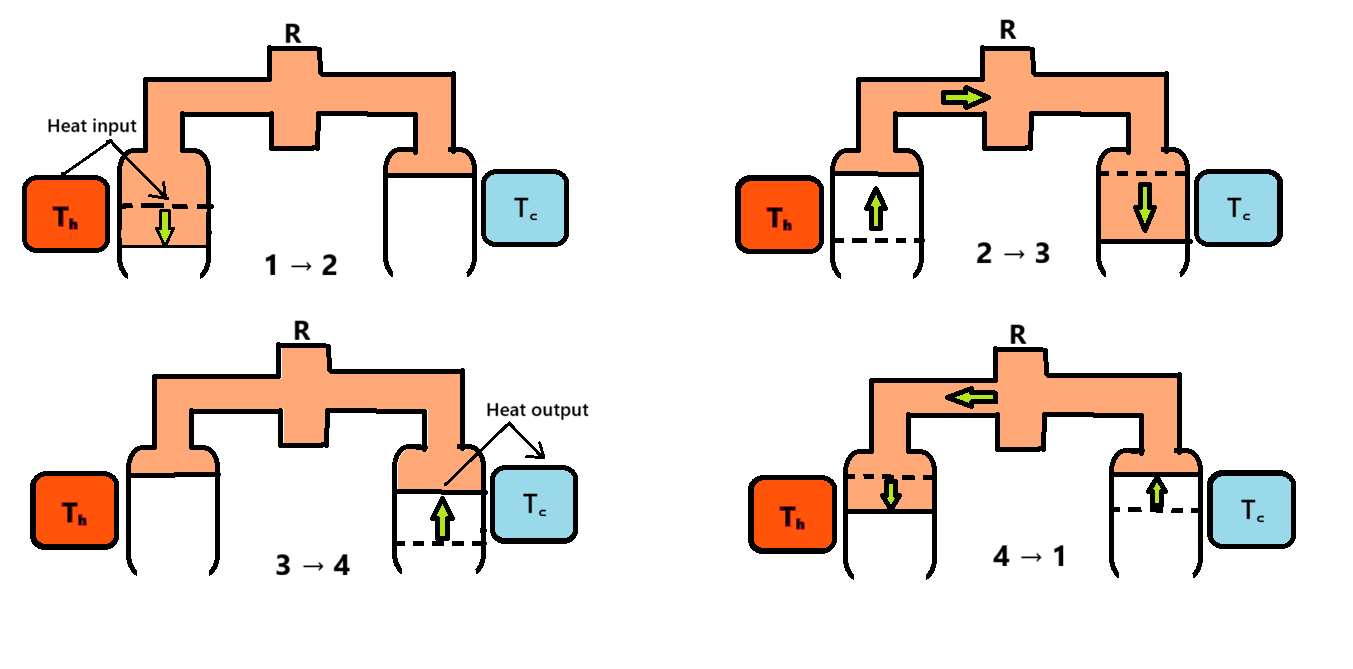}
    \caption{Operation of an  alpha-type   Stirling engine with a regenerator.}
    \label{fig:my_figurenew}
\end{figure}

For concreteness, we describe the four steps of an idealized alpha-type Stirling cycle (see also Figure \ref{fig:my_figurenew}): 

 \begin{itemize}

\item[(i)] $1 \to 2$: \emph{Isothermal expansion} (at the hot temperature $T_{\text{h}}$). The working substance in the hot cylinder absorbs heat from the heat reservoir at temperature $T_\text{h}$. The gas expands isothermally from volume $V_1 $ to $V_2$, pushing the hot piston outward and thereby performing work. The piston in the cold cylinder remains at rest. 

\item[(ii)] $2 \to 3$: \emph{Isochoric cooling} (heat rejection to regenerator or cold reservoir). The hot piston moves in while the cold piston moves out, transferring the gas between the cylinders while keeping the total volume constant. In the presence of regeneration, the gas flows past the regenerator, giving up heat and cooling to $T_{\text{c}}$.  In the absence of regeneration, the gas instead releases heat to the cold reservoir and cools to $T_{\text{c}}$.  Maintaining reversibility in this case would require a continuum of cold reservoirs so that the temperature difference remains infinitesimal throughout the process.

\item[(iii)] $3 \to 4$: \emph{Isothermal compression} (at the cold temperature $T_{\text{c}}$). The cold piston moves inward, compressing the gas isothermally from volume $V_2$ back to its original volume~$V_1$. 
The gas releases heat to the cold reservoir while remaining at temperature~$T_{\text{c}}$. 
During this process work is done on the gas, while the hot piston remains fixed.

\item[(iv)] $4 \to 1$: \emph{Isochoric heating} (heat absorption  from regenerator or hot reservoir). 
The cold piston moves the rest of the way inward while the hot piston moves outward, transferring the gas back to the hot cylinder while keeping the total volume constant.  In the presence of regeneration, the gas passes the regenerator and reabsorbs the stored heat until its temperature reaches $T_{\text{h}}$.  In the absence of regeneration, the gas instead absorbs heat directly from the hot reservoir, which would require a continuum of hot reservoirs for the process to remain reversible.

\end{itemize}

\noindent For later reference,  we summarize   the relations between the different temperatures and   between the volumes 
\begin{equation}
T_1=T_2=T_{\rm h}\,,\qquad T_3=T_4=T_{\rm c}\,,\qquad V_2=V_3\,,\qquad V_1=V_4\,,
 \end{equation}
 and $T_{\rm h}>T_{\rm c}$ and $V_2>V_1$.

  To recapitulate, in the absence of a regenerator, heat is supplied from external reservoirs during processes $1\to2$ and $4\to1$, and is expelled during $2\to3$ and $3\to4$. \textcolor{black}{With pointwise matched regeneration, the isochoric exchange is entirely internal: external heat enters on $1\to2$ and leaves on $3\to4$. More generally, a local isochoric surplus or deficit must also be exchanged with auxiliary reservoirs at the corresponding temperatures, as described below.}  \\

\noindent \textbf{Stirling efficiency.}
The efficiency of the Stirling cycle can be computed in two ways, using the two expressions in \eqref{efficiency}. 
We denote by \(Q^{i\to j}_{\text{in}}\) and \(Q^{i\to j}_{\text{out}}\) the heat absorbed or released by the working substance along the path \(i\to j\), and by \(W_{i\to j}\) the work done by the system  along that path.  Along isotherms the heat transfer is given by 
  the Clausius relation \(Q = T\Delta S\), while  along isochores with the volume and conserved charges fixed, the heat transfer equals the change in internal energy, \(Q=\Delta E\).

In the absence of regeneration, heat enters the system along the paths \(1\to2\) and \(4\to1\), and leaves the system along the paths \(2\to3\) and \(3\to4\).  
The efficiency can then be written as
\begin{equation} \label{nonregen1}
    \eta_{\text{non-regen}}
    = \frac{W_{1\to2}+W_{3\to4}}{Q_{\text{in}}^{1\to2}+Q_{\text{in}}^{4\to1}}
    = \frac{\int_{V_1}^{V_2}[P(V,T_{\text{h}})-P(V,T_{\text{c}})]\,dV}
    {T_{\text{h}}\Delta S_{1\to2}+\Delta E_{4\to1}}\,,
\end{equation}
where the numerator represents the net work done during the two isothermal processes. 
Equivalently, using \(\eta = 1-Q_{\text{out}}/Q_{\text{in}}\), we may write
\begin{equation} \label{nonregen2}
    \eta_{\text{non-regen}}
    = 1-\frac{Q_{\text{out}}^{2\to3}+Q_{\text{out}}^{3\to4}}
    {Q_{\text{in}}^{1\to2}+Q_{\text{in}}^{4\to1}}
    = 1-\frac{\Delta E_{2\to3}+T_{\text{c}}\Delta S_{3\to4}}
    {T_{\text{h}}\Delta S_{1\to2}+\Delta E_{4\to1}} \,.
\end{equation}
 The quantities $\Delta S$ and $\Delta E$ appearing on the right-hand side are positive magnitudes, consistently with the conventions for $Q_{\rm in}$ and $Q_{\rm out}$.  

In the presence of regeneration, the heat expelled during isochoric cooling is stored internally in a regenerator and later returned to the working substance during isochoric heating. This reduces the net heat exchange with external reservoirs along the isochoric steps, although it does not reduce it to zero generically (see e.g. \cite{Huang2014,Gupt:2021iir,Altintas:2026jxc}). The positive quantity
\begin{equation} \label{defregen}
Q_{\rm mis}
\equiv
\left|Q^{2\to3}_{\rm out}-Q^{4\to1}_{\rm in}\right|
\end{equation}
\textcolor{black}{measures the net energetic difference between the two
isochoric branches. For regeneration to be reversible, the heat released
during cooling at each temperature must equal the heat required during heating
at that same temperature. At fixed conserved charges, define the local
difference
\begin{equation}
dQ_{\rm loc}(T)
=
\big[
C_V(T,V_2,\mathcal Q_i)
-
C_V(T,V_1,\mathcal Q_i)
\big]\,dT,
\qquad
Q_{\rm mis}
=
\left|
\int_{T_{\rm c}}^{T_{\rm h}} dQ_{\rm loc}(T)
\right|.
\end{equation}
If $dQ_{\rm loc}(T)$ has the same sign throughout the temperature interval,
then $Q_{\rm mis}$ is the total amount of heat that cannot be recycled
internally, and the efficiency formulas below apply. If the sign changes,
surplus heat at some temperatures can cancel a heat deficit at other
temperatures in the integral defining $Q_{\rm mis}$. In that case,
$Q_{\rm mis}$ alone does not determine the total heat that must be exchanged
externally. The formulas below should then be interpreted only as net
energy-balance expressions, and a reversible treatment must keep track of the
surplus and deficit separately at each temperature. Any unmatched heat must be
exchanged with external reservoirs at the corresponding temperatures. In the applications below, we use the efficiency formulas directly only
when $dQ_{\rm loc}(T)$ has a fixed sign throughout the relevant temperature
interval.}


In addition, realistic regenerators are not $100\%$ effective: only a fraction
of the heat available during isochoric cooling is recovered and returned during
isochoric heating. This is typically characterized by a regenerator
effectiveness $\varepsilon_R<1$, leading to additional heat exchange with the
external reservoirs and a corresponding reduction in efficiency. In this work,
we assume an ideal regenerator with $\varepsilon_R=1$ and focus instead on the
intrinsic mismatch between the heat released during isochoric cooling and the
heat required during isochoric heating. This intrinsic mismatch should be
distinguished from losses caused by finite regenerator effectiveness.

\textcolor{black}{In particular, if $Q^{2\to3}_{\rm out}>Q^{4\to1}_{\rm in}$ and the local mismatch has the corresponding fixed sign, the regenerator stores excess heat during cooling. The surplus is rejected externally at the temperatures at which it occurs so that the regenerator returns to its initial state.} In this case the efficiency becomes
\begin{equation} \label{regen1}
    \eta_{\text{regen}}
    = \frac{W_{1\to2}+W_{3\to4}}{Q_{\text{in}}^{1\to2}}
    = \frac{\int_{V_1}^{V_2}[P(V,T_{\text{h}})-P(V,T_{\text{c}})]\,dV}
    {T_{\text{h}}\Delta S_{1\to2}}\,,
\end{equation}
or equivalently
\begin{equation} \label{regen2}
    \eta_{\text{regen}}
    = 1-\frac{Q_{\text{out}}^{3\to4}+Q_{\text{mis}}}{Q_{\text{in}}^{1\to2}}
    = 1-\frac{T_{\text{c}}\Delta S_{3\to4}+Q_{\text{mis}}}
    {T_{\text{h}}\Delta S_{1\to2}}\,,
\end{equation}
where $  Q_{\text{mis}}$ 
denotes the excess heat  temporarily stored in the regenerator, but later expelled.

\textcolor{black}{In the opposite fixed-sign case, $Q^{2\to3}_{\rm out}<Q^{4\to1}_{\rm in}$, the stored heat is insufficient and an additional amount $Q_{\rm mis}$ must be supplied externally at the corresponding temperatures.} 
The efficiency then becomes
\begin{equation} \label{regen3}
    \eta_{\text{regen}}
    = \frac{W_{1\to2}+W_{3\to4}}{Q_{\text{in}}^{1\to2}+Q_{\text{mis}}}
    = \frac{\int_{V_1}^{V_2}[P(V,T_{\text{h}})-P(V,T_{\text{c}})]\,dV}
    {T_{\text{h}}\Delta S_{1\to2}+Q_{\text{mis}}}\,,
\end{equation}
or equivalently
\begin{equation} \label{regen4}
    \eta_{\text{regen}}
    = 1-\frac{Q_{\text{out}}^{3\to4}}{Q_{\text{in}}^{1\to2}+Q_{\text{mis}}}
    = 1-\frac{T_{\text{c}}\Delta S_{3\to4}}
    {T_{\text{h}}\Delta S_{1\to2}+Q_{\text{mis}}}\,.
\end{equation}
\textcolor{black}{We emphasize that, in these two fixed-sign cases, an external exchange of magnitude $Q_{\rm mis}$ is needed for both the working substance and the regenerator to return to their initial states.} For instance, for  CFT engines  $Q_{\text{mis}}$ is generically nonzero. However, there exist working substances for which $Q_{\text{mis}}=0$. Classical ideal gases and Van der Waals fluids 
are examples of such working substances.   We will show this explicitly   in the next sections.

\subsection{Theorem on the efficiency of the Stirling engine}
\label{sec:stirlingtheorem}

\textcolor{black}{As noted above, the quantity $Q_{\rm mis}$ need not vanish for a general working substance. We now identify a criterion under which the local isochoric mismatch vanishes pointwise, and hence $Q_{\rm mis}=0$.}

\begin{prop}
\textcolor{black}{For an ideal regenerative Stirling engine with conserved charges \(\mathcal{Q}_i\) held fixed throughout the cycle, the local isochoric heat mismatch vanishes at every temperature (and hence \(Q_{\rm mis}=0\)) if the constant-volume heat capacity is independent of volume, i.e.} 
\begin{equation} \label{criterionheatcapacity}
    C_V(T,V,\mathcal{Q}_i) = C_V(T,\mathcal{Q}_i)\,.
\end{equation}
\end{prop}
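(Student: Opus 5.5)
The plan is to work directly from the local heat exchanges along the two isochores, using only the first law at fixed volume and fixed charges. Along an isochore with $V$ and $\mathcal Q_i$ held fixed, $dV=0$ and $d\mathcal Q_i=0$. The first law $dE=\delta Q-P\,dV+\sum_i\Psi_i\,d\mathcal Q_i$ then reduces to $\delta Q=dE=C_V(T,V,\mathcal Q_i)\,dT$, where $C_V\equiv(\partial E/\partial T)_{V,\mathcal Q_i}$. This identification requires the working substance to stay in equilibrium, so that $E$ is a function of $(T,V,\mathcal Q_i)$ along the path, which is guaranteed by the quasi-static assumption of section~\ref{sec:2.1}.

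Next, parametrize both isochores by temperature. On $2\to3$ (at $V_2$) the substance cools from $T_{\rm h}$ to $T_{\rm c}$. In the temperature window $[T,T+dT]$ it releases heat $C_V(T,V_2,\mathcal Q_i)\,dT$. On $4\to1$ (at $V_1$) it heats from $T_{\rm c}$ to $T_{\rm h}$ and absorbs $C_V(T,V_1,\mathcal Q_i)\,dT$ in the same window. The local mismatch is therefore exactly the quantity $dQ_{\rm loc}(T)=[C_V(T,V_2,\mathcal Q_i)-C_V(T,V_1,\mathcal Q_i)]\,dT$ defined in section~\ref{sec:2.1}. Under hypothesis~\eqref{criterionheatcapacity} the two heat capacities coincide for every $T\in[T_{\rm c},T_{\rm h}]$, so $dQ_{\rm loc}(T)=0$ pointwise. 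Integrating gives $Q_{\rm mis}=|\int_{T_{\rm c}}^{T_{\rm h}}dQ_{\rm loc}|=0$. Equivalently, $\Delta E_{2\to3}=\int_{T_{\rm c}}^{T_{\rm h}}C_V(T,\mathcal Q_i)\,dT=\Delta E_{4\to1}$.

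I would add one remark to make ``local'' meaningful for the regenerator. The heat deposited at temperature $T$ is withdrawn at the same $T$, so the regenerator's entropy change is $\int[C_V(T,V_2)-C_V(T,V_1)]\,dT/T=0$ as well as its energy change. The regenerator therefore returns to its initial state.

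The main obstacle is not computational but conceptual: justifying that heat exchanged at a given temperature on one isochore can be matched with heat at the same temperature on the other. This needs both isochores to be monotonic in $T$ over the full range $[T_{\rm c},T_{\rm h}]$, and it needs $C_V$ to exist, which may fail at phase transitions such as BEC. I would handle this by noting that the temperature is monotonic along each isochore by construction of the cycle. Where $C_V$ has isolated singularities, I would integrate in terms of $dE$ at fixed $T$-windows instead, where the hypothesis is read as volume-independence of $E(T,V,\mathcal Q_i)$ up to a $T$-independent term.
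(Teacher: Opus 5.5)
Your proposal is correct and follows essentially the same route as the paper: on an isochore at fixed $V$ and $\mathcal Q_i$ the heat exchanged is $C_V\,dT$, so the local mismatch $dQ_{\rm loc}(T)=[C_V(T,V_2,\mathcal Q_i)-C_V(T,V_1,\mathcal Q_i)]\,dT$ vanishes pointwise under the hypothesis, and integrating over $[T_{\rm c},T_{\rm h}]$ gives $Q_{\rm mis}=0$. Your additional remark that the regenerator's entropy change, $\int dQ_{\rm loc}/T$, also vanishes is an argument the paper makes only in the proof of the subsequent Theorem, where it shows $S_2-S_3=S_1-S_4$.
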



\begin{proof}
Along an isochore with the conserved charges $\mathcal Q_i$ held fixed, no
mechanical or generalized work is performed. The reversible heat exchange is
therefore determined by the constant-volume heat capacity,
\begin{equation}
C_V(T,V,\mathcal Q_i)
\equiv
T\left(\frac{\partial S}{\partial T}\right)_{V,\mathcal Q_i}
=
\left(\frac{\partial E}{\partial T}\right)_{V,\mathcal Q_i}.
\end{equation}
It follows that the amounts of heat released during isochoric cooling and
absorbed during isochoric heating are
\begin{equation}
Q_{\rm out}^{2\to3}
=
\int_{T_{\rm c}}^{T_{\rm h}}
C_V(T,V_2,\mathcal Q_i)\,dT,
\qquad
Q_{\rm in}^{4\to1}
=
\int_{T_{\rm c}}^{T_{\rm h}}
C_V(T,V_1,\mathcal Q_i)\,dT.
\end{equation}
Here both integrals are written from $T_{\rm c}$ to $T_{\rm h}$ because
$Q_{\rm out}^{2\to3}$ and $Q_{\rm in}^{4\to1}$ denote positive amounts of
heat.

\textcolor{black}{The difference between the heat exchanged by the two isochores in a small
temperature interval is
\begin{equation}
dQ_{\rm loc}(T)
=
\left[
C_V(T,V_2,\mathcal Q_i)
-
C_V(T,V_1,\mathcal Q_i)
\right]dT.
\end{equation}
If the heat capacity is independent of volume,
 Eq. \eqref{criterionheatcapacity},
then
$
dQ_{\rm loc}(T)=0
$
at every temperature between $T_{\rm c}$ and $T_{\rm h}$. Thus, the heat
released during cooling equals the heat required during heating not only
after integration, but at each temperature along the isochores.
Integrating the local difference gives
\begin{equation}
Q_{\rm out}^{2\to3}-Q_{\rm in}^{4\to1}
=
\int_{T_{\rm c}}^{T_{\rm h}}dQ_{\rm loc}(T)
=0.
\end{equation}
By definition \eqref{defregen}, this implies \(Q_{\text{mis}}=0\).}
\end{proof}

\noindent
We now show that, under the same condition, the efficiency of the regenerative Stirling engine coincides with the Carnot efficiency,
\begin{equation}
\eta_{\rm regen} = \eta_{\rm Carnot}
\qquad \text{if} \qquad
C_V(T,V,\mathcal{Q}_i) = C_V(T,\mathcal{Q}_i)\,,
\end{equation}
where 
\begin{equation} \label{carnot}
\eta_{\rm Carnot}=1-\frac{T_{\rm c}}{T_{\rm h}}\,.
\end{equation}

\begin{theorem}
An ideal regenerative Stirling engine with fixed conserved charges \(\mathcal{Q}_i\) attains the Carnot efficiency if the constant-volume heat capacity is independent of volume.
\end{theorem}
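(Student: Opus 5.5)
The plan is to reduce the efficiency to the two isothermal entropy changes, using the Proposition to remove the isochoric contribution. By the Proposition, condition \eqref{criterionheatcapacity} gives $dQ_{\rm loc}(T)=0$ at every temperature. Hence $Q_{\rm mis}=0$, and the regenerator absorbs and returns exactly the same heat at each temperature. The regenerative formula \eqref{regen2} then reduces to
\begin{equation*}
\eta_{\rm regen}=1-\frac{T_{\rm c}\,\Delta S_{3\to4}}{T_{\rm h}\,\Delta S_{1\to2}}\,.
\end{equation*}
What remains is to prove $\Delta S_{1\to2}=\Delta S_{3\to4}$, i.e.\ $S(T_{\rm h},V_2)-S(T_{\rm h},V_1)=S(T_{\rm c},V_2)-S(T_{\rm c},V_1)$ at fixed $\mathcal Q_i$.

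For this step I would use that entropy is a state function and return to the start of the cycle. The total entropy change of the working substance around the cycle is zero:
\begin{equation*}
\Delta S_{1\to2}-\Delta S_{3\to4}+\int_{T_{\rm c}}^{T_{\rm h}}\frac{C_V(T,V_1,\mathcal Q_i)-C_V(T,V_2,\mathcal Q_i)}{T}\,dT=0\,,
\end{equation*}
where the isochoric entropy changes are obtained from $C_V=T(\partial S/\partial T)_{V,\mathcal Q_i}$. Here the \emph{pointwise} vanishing of $dQ_{\rm loc}$ matters, not just $Q_{\rm mis}=0$. With $C_V$ independent of $V$, the integrand vanishes identically, so the integral is zero even with the weight $1/T$. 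Therefore $\Delta S_{1\to2}=\Delta S_{3\to4}$ and $\eta_{\rm regen}=1-T_{\rm c}/T_{\rm h}$.

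An equivalent route, which I would mention as a check, differentiates $\partial_V S$ with respect to $T$. One has $\partial_T(\partial_V S)_T=\partial_V(C_V/T)=0$, so $\Delta S$ along an isotherm between $V_1$ and $V_2$ is independent of $T$.

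I would close with the physical Carnot-theorem argument. The regenerator returns to its initial state with zero net energy and zero net entropy change, since at each $T$ it receives and returns $C_V\,dT$ and hence $C_V\,dT/T$. The only external exchanges are therefore isothermal at $T_{\rm h}$ and $T_{\rm c}$, and a reversible engine of this kind must attain the Carnot efficiency.

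The main subtlety is the entropy bookkeeping: one must make sure the regenerator's entropy change vanishes, not merely its energy change. The pointwise condition from the Proposition is exactly what secures this, so I expect no real obstacle beyond stating that step carefully. Smoothness of $S(T,V)$ and the fixed charges are assumed throughout.
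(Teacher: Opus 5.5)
Your proposal is correct and follows essentially the same route as the paper. Both use the pointwise isochoric matching to reduce the efficiency to $1-T_{\rm c}(S_3-S_4)/[T_{\rm h}(S_2-S_1)]$, and then show $S_2-S_3=S_1-S_4$ by writing each isochoric entropy change as $\int_{T_{\rm c}}^{T_{\rm h}} C_V\,dT/T$; your cycle-closure identity is the same argument rearranged, and your mixed-partial-derivative check and Carnot-theorem remark are valid supplements.
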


\begin{proof}
\textcolor{black}{Under the hypothesis of the Proposition, the heat released
along one isochore equals the heat absorbed along the other at every
temperature. An ideal regenerator can therefore store and return this heat
reversibly, returning to its initial thermodynamic state with no external heat
exchange along the isochores. The efficiency of the regenerative Stirling cycle then reduces to}
\begin{equation} \label{intermediatestirlingefficiency}
\eta_{\rm regen}
= 1 - \frac{|Q_{3\rightarrow4}|}{|Q_{1\rightarrow2}|}
= 1 - \frac{T_{\rm c}(S_3-S_4)}{T_{\rm h}(S_2-S_1)} \,.
\end{equation}
Thus it suffices to show that \(S_3-S_4 = S_2-S_1\). It turns out that this entropy relation follows automatically from our criterion \eqref{criterionheatcapacity}. 
That is because, along reversible isochoric paths, one has
\begin{equation}
    S_2 - S_3 = \int_{T_3}^{T_2} \frac{C_V(T,V_2,\mathcal{Q}_i)}{T}\, dT\,,
    \qquad
    S_1 - S_4 = \int_{T_4}^{T_1} \frac{C_V(T,V_1,\mathcal{Q}_i)}{T}\, dT\,.
\end{equation}
If \(C_V\) is independent of volume, the integrands coincide as functions of \(T\), and since \(T_1=T_2\) and \(T_3=T_4\), the two integrals are equal. This implies
 $ 
    S_2 - S_3 = S_1 - S_4,
$ 
or equivalently \(S_3-S_4 = S_2-S_1\). Hence, the Stirling efficiency \eqref{intermediatestirlingefficiency} reduces to the Carnot efficiency.
\end{proof}

\noindent
A few comments are in order.

First, the assumption that all conserved charges \(\mathcal{Q}_i\) are held fixed throughout the cycle is essential. If any of the charges is allowed to vary, the relation
 $ 
    Q = \int C_V(T,V,\mathcal{Q}_i)\, dT
$
no longer holds. This is because, along an isochore, the entropy can change not only due to variations in temperature but also due to variations in the conserved charges. Infinitesimally, one has
\begin{equation}
    dS = \left(\frac{\partial S}{\partial T}\right)_{V,\mathcal{Q}_i} dT
    + \sum_i \left(\frac{\partial S}{\partial \mathcal{Q}_i}\right)_{T,V,\mathcal{Q}_{j\neq i}} d\mathcal{Q}_i\,.
\end{equation}
Using Clausius' relation and $C_V\equiv T(\partial S/\partial T)_{V,\mathcal Q_i}$, this gives
\begin{equation}
    Q
    = \int \left[
    C_V(T,V,\mathcal{Q}_i)\, dT
    + T \sum_i \left(\frac{\partial S}{\partial \mathcal{Q}_i}\right)_{T,V,\mathcal{Q}_{j\neq i}} d\mathcal{Q}_i
    \right].
\end{equation}
If the conserved charges are fixed, \(d\mathcal{Q}_i=0\), the second term vanishes and one recovers the relation used above. However, if any of the charges varies along the cycle, this additional contribution is generically nonzero, and the heat transfer is no longer determined solely by the temperature variation. Instead, it depends on the path in the space of thermodynamic variables, and the heat exchanged along the two isochores is no longer guaranteed to coincide. Consequently, their cancellation generically fails.

\begingroup\color{black}
Second, the criterion $C_V(T,V,\mathcal Q_i)=C_V(T,\mathcal Q_i)$ is sufficient but not necessary for pointwise matching on a particular cycle. Equality of the unweighted integrals,
\begin{equation}
    \int_{T_{\rm c}}^{T_{\rm h}} C_V(T,V_2,\mathcal{Q}_i)  dT
    =
    \int_{T_{\rm c}}^{T_{\rm h}} C_V(T,V_1,\mathcal{Q}_i)  dT\,,
\end{equation}
is sufficient only for the net energetic mismatch $Q_{\rm mis}$ to vanish. A necessary entropy balance condition is additionally
\begin{equation}
    \int_{T_{\rm c}}^{T_{\rm h}} \frac{C_V(T,V_2,\mathcal{Q}_i)}{T}  dT
    =
    \int_{T_{\rm c}}^{T_{\rm h}} \frac{C_V(T,V_1,\mathcal{Q}_i)}{T}  dT\,.
\end{equation}
For a reversible regenerator, a natural sufficient condition is the stronger
pointwise equality
$ 
C_V(T,V_2,\mathcal Q_i)=C_V(T,V_1,\mathcal Q_i)
$  
throughout the temperature interval. This ensures that the heat released during
cooling equals the heat required during heating at each temperature.
Condition~\eqref{criterionheatcapacity} guarantees this automatically.

Other routes to the Carnot limit are possible. Section~\ref{subsec:stirling_equal_to_carnot} realizes an asymptotic mechanism: the isochoric energy and entropy mismatches remain nonzero at finite electric potential $\tilde\Phi$ but become subleading relative to the isothermal heat as $\tilde\Phi\to\infty$.
\endgroup

Third, our result is fully consistent with   Carnot's theorem. According to (part of) Carnot's theorem, a reversible engine operating between two reservoirs at temperatures $T_{\rm h}$ and $T_{\rm c}$ attains the Carnot efficiency. For such an engine, all heat exchange with the reservoirs must occur isothermally, since any heat transfer at a different temperature would involve a finite temperature difference and hence generate entropy, violating reversibility. 
In a generic Stirling cycle, however, heat is also exchanged along the isochores, during which the temperature of the working substance varies continuously.  Implementing these isochoric steps reversibly requires a continuum of reservoirs, thereby violating the two reservoir assumption of Carnot's theorem. 

\textcolor{black}{The key point of our theorem is that condition~\eqref{criterionheatcapacity},
not merely $Q_{\rm mis}=0$, makes the isochoric heat exchanges match at every
temperature. The isochoric exchange is then entirely internal and reversible, and the only heat exchanged with external reservoirs occurs on the two isotherms. The cycle therefore satisfies the assumptions of Carnot's theorem.}

Finally, the criterion \eqref{criterionheatcapacity} can be verified explicitly 
for several familiar working substances. For classical ideal gases and Van der 
Waals fluids one has $C_V=\frac{f}{2}N$, independent of volume (at fixed $N$), 
so the condition is satisfied and Carnot efficiency is attained. This also 
explains the standard textbook result (see e.g. \cite{CengelBoles2014}) for the Stirling cycle, which typically assumes 
an ideal gas with perfect regeneration.  By contrast, the   quantum ideal gases and neutral
CFT working substances considered below do not satisfy this condition; their
deviations from Carnot efficiency follow from the explicit calculations of the
heat exchanged along the cycle.

\section{Stirling engine for   classical ideal gases}
\label{sec:idealgas}

We now compute the Stirling efficiency for a classical ideal gas with \(N\) particles and \(f\) degrees of freedom per particle, working in units where \(k_B=1\). We assume that the particle number \(N\) is conserved throughout the cycle, in accordance with the condition of fixed conserved charges discussed above. While this is a standard textbook calculation (see e.g. problem 4.21 in\ \cite{schroeder2000introduction}), it is instructive to revisit it in order to see the differences between the non-regenerative and regenerative cases. In particular, this example provides a concrete illustration of the theorem established above. 

\begin{figure}[t]
    \centering

    \begin{subfigure}{0.45\textwidth}
        \centering
        \includegraphics[width=\linewidth]{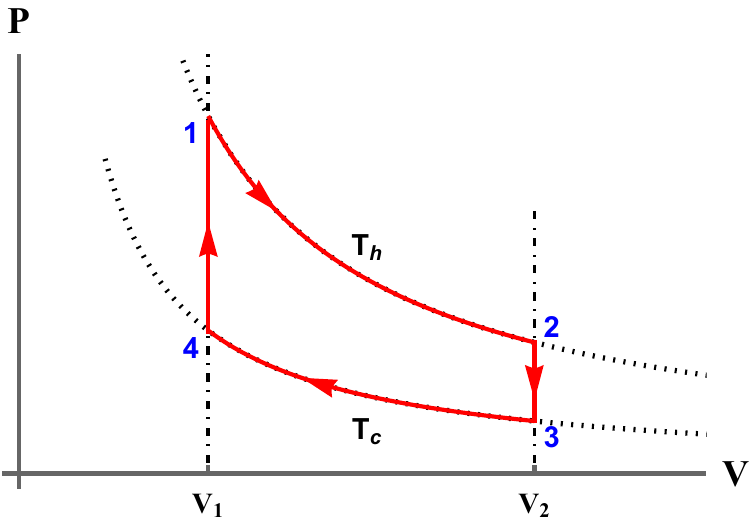}
        \caption{$PV$-diagram for a classical ideal gas. }
        \label{fig:sub1}
    \end{subfigure}
    \hfill
    \begin{subfigure}{0.45\textwidth}
        \centering
        \includegraphics[width=\linewidth]{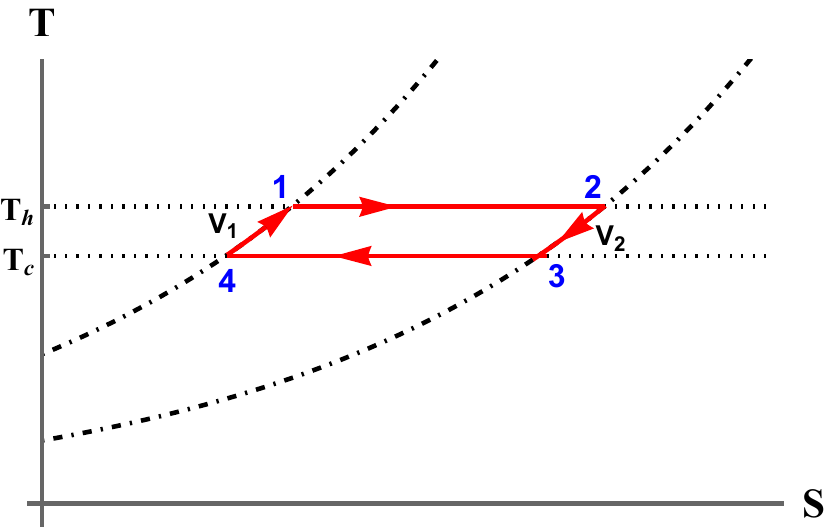}
        \caption{$TS$-diagram for a classical ideal gas. }
        \label{fig:sub2idealgas}
    \end{subfigure}

    \caption{Stirling cycle for a classical ideal gas. }
    \label{fig:main}
\end{figure}

In this case, the equation of state (a.k.a. ideal gas law) and internal energy are, respectively,
\begin{equation}
    PV = NT\,,
\qquad
    E = \frac{f}{2}NT\,. \label{idealgasenergy}
\end{equation}
These two relations are enough to evaluate explicitly the heat and work along each branch of the Stirling cycle. In particular, since the energy depends only on the temperature, one has \(\Delta E=0\) along an isotherm. Moreover,  it follows from the first law and Clausius' relation that the entropy change along an isotherm is $\Delta S = N \ln (V_2/V_1).$ \\

\noindent \textbf{In absence of regeneration.}
We begin with the two isothermal branches. Along \(1\to2\), the gas expands at temperature \(T_{\rm h}\), so the work is
\begin{equation}
    W_{1\to2}=\int_{V_1}^{V_2} P dV
    = \int_{V_1}^{V_2} \frac{NT_{\rm h}}{V} dV
    = NT_{\rm h}\ln\!\left(\frac{V_2}{V_1}\right)\,.
\end{equation}
Since \(\Delta E=0\) on an isotherm, the first law implies that the heat absorbed along this branch is equal to the work:
\begin{equation}
    Q^{1\to2}_{\rm in}=W_{1\to2}
    = NT_{\rm h}\ln\!\left(\frac{V_2}{V_1}\right)\,,
\end{equation}
which, equivalently, can be obtained from Clausius' relation $Q_{\rm in }^{1 \to 2}= T_{\rm h} \Delta S_{1 \to 2}$. 
Similarly, along the cold isotherm \(3\to4\),
\begin{equation}
    W_{3\to4}=\int_{V_3}^{V_4} P dV
    = \int_{V_2}^{V_1} \frac{NT_{\rm c}}{V} dV
    = -NT_{\rm c}\ln\!\left(\frac{V_2}{V_1}\right)\,.
\end{equation}
Again \(\Delta E=0\), so the magnitude of the heat rejected along this branch is
\begin{equation}
    Q^{3\to4}_{\rm out}=|W_{3\to4}|
    = NT_{\rm c}\ln\!\left(\frac{V_2}{V_1}\right).
\end{equation}
Along the isochores no work is done. The heat exchanged is therefore entirely given by the change in energy. Using \eqref{idealgasenergy}, one finds
\begin{align}
    Q^{2\to3}_{\rm out}=E_2-E_3
    &= \frac{f}{2}N(T_{\rm h}-T_{\rm c})\,, \\
    Q^{4\to1}_{\rm in}=E_1-E_4
   & = \frac{f}{2}N(T_{\rm h}-T_{\rm c})\,.
\end{align}
Thus the heat exchanged along the two isochores has the same magnitude.
Finally, by inserting the heat and work expressions into one of the   definitions of the Stirling efficiency,   equations \eqref{nonregen1} or \eqref{nonregen2},   one finds
\begin{equation} \label{nonregenidealgas}
    \eta_{\rm non\text{-}regen}^{\rm ideal\,gas}
    =
    \frac{(T_{\rm h}-T_{\rm c})\ln(V_2/V_1)}
    {T_{\rm h}\ln(V_2/V_1)+\frac{f}{2}(T_{\rm h}-T_{\rm c})}= \left( \frac{1}{\eta_{\rm Carnot}} + \frac{f}{2\ln (V_2/V_1)}\right)^{-1}\,.
\end{equation}
 The second term  between brackets on the right side is always positive, but is smaller for larger compression ratios $V_2/V_1$. Therefore, the efficiency is always smaller than the Carnot efficiency by an amount that is smaller when the compression ratio is larger.\\

\noindent \textbf{In presence of regeneration.}
In the ideal gas case, the heat capacity at constant volume is
\begin{equation} \label{classicalidealgasheat}
    C_V=\left(\frac{\partial E}{\partial T}\right)_{V,N}=\frac{f}{2}N\,,
\end{equation}
which is independent of volume and  temperature. As a result, the heat exchanged along an isochore depends only on the temperature difference \(T_{\rm h}-T_{\rm c}\), and the amount of heat released in the step \(2\to3\) is exactly equal to the amount of   heat absorbed in the step \(4\to1\):
\begin{equation}
    Q^{2\to3}_{\rm out} = Q^{4\to1}_{\rm in}
    = \frac{f}{2}N(T_{\rm h}-T_{\rm c}) \,.
\end{equation}
For an ideal regenerator, this heat is stored during \(2\to3\) and fully returned during \(4\to1\), so  
 $
    Q_{\text{mis}}=0.
$ 
The only external heat input and output is then that supplied along the hot and cold isotherms, respectively,
\begin{align}
    Q_{\rm in}&=Q^{1\to2}_{\rm in}
    = NT_{\rm h}\ln\!\left(\frac{V_2}{V_1}\right)\, , \\
    Q_{\rm out}&=Q^{3\to4}_{\rm out}=NT_{\rm c}\ln\left(\frac{V_2}{V_1}\right)\,,
\end{align}
while the total work done per cycle is unchanged,
\begin{equation}
    W = W_{1\to2}+W_{3\to4}=N(T_{\rm h}-T_{\rm c})\ln\!\left(\frac{V_2}{V_1}\right).
\end{equation}
\textcolor{black}{Hence, inserting these expressions into the regenerative efficiency gives the Carnot result. The formulas~\eqref{regen1}--\eqref{regen4} coincide because the isochoric
heat exchanges match at every temperature. Thus,}  
\begin{equation}\label{idealgasregenerative}
    \eta_{\rm regen}^{\rm ideal\,gas}
    = 1-\frac{T_{\rm c}}{T_{\rm h}}\,.
\end{equation}
The second term in \eqref{nonregenidealgas}, which arises from the heat exchange along the isochores, no longer contributes in the presence of an ideal regenerator, since this heat is entirely recycled internally and does not enter the external heat balance.

\section{Stirling engine for   Van der Waals fluids}
\label{sec:waals}

We now extend the discussion to a Van der Waals fluid. This provides a simple example of an interacting working substance, with the parameter \(a\) accounting for attractive interactions and \(b\) for excluded-volume effects associated with the finite size of the particles. The equation of state in any spacetime dimension $D>3$ is
\begin{equation}
    \left(P+\frac{aN^2}{V^2}\right)(V-bN)=NT \, ,
\end{equation}
and the internal energy is
\begin{equation}
    E=\frac{f}{2}NT-\frac{aN^2}{V}.
    \label{vdwenergy}
\end{equation}
Unlike the ideal gas case, for which $a=b=0$, the internal energy now depends explicitly on the volume, so in general \(\Delta E\neq 0\) along an isotherm. The entropy takes the form of a modified Sackur-Tetrode expression,
\begin{equation}
    S = N\ln(V-bN)+\frac{f}{2}N\ln T - N \ln N +N\times\text{const.},
\end{equation}
where \(b\) is the excluded volume per particle, so that \(bN\) is the total excluded volume and \(V-bN\) is the volume accessible to the gas. Therefore, along an isotherm (at fixed \(N\)),
\begin{equation}
    \Delta S = N\ln\!\left(\frac{V_2-bN}{V_1-bN}\right).
\end{equation}
The Van der Waals system has a critical point (second-order phase transition), which exists for all spacetime dimensions $D>3$ and occurs at
\begin{equation}
    T_{\rm crit} = \frac{8 a}{27 b}\,, \qquad V_{\rm crit} =  3Nb \, , \qquad P_{\rm crit} = \frac{a}{27 b^2}\,.
\end{equation}
In the canonical ensemble at fixed $(T,V,N)$, above the critical temperature, the Van der Waals system is in a   single homogeneous phase. Below the critical temperature, the system admits distinct liquid and gas phases, but in this ensemble they appear in coexistence, as a mixture   over a finite range of volumes (in contrast to the fixed $(T,P,N)$ ensemble, where there is a first-order liquid-gas phase transition for $T<T_{\rm crit}$). \\

\noindent \textbf{In absence of regeneration.}
We begin again with the hot isotherm \(1\to2\). Using the equation of state, the work is
\begin{align}
    W_{1\to2}
    =\int_{V_1}^{V_2}P dV =\int_{V_1}^{V_2}\left(\frac{NT_{\rm h}}{V-bN}-\frac{aN^2}{V^2}\right)dV  =NT_{\rm h}\ln\!\left(\frac{V_2-bN}{V_1-bN}\right)
      +aN^2\left(\frac{1}{V_2}-\frac{1}{V_1}\right)\,.
\end{align}
The change in internal energy along this branch is
\begin{equation}
    \Delta E_{1\to2}
    =E_2-E_1
    =-aN^2\left(\frac{1}{V_2}-\frac{1}{V_1}\right)\,.
\end{equation}
Hence, by the first law,
\begin{align}
    Q^{1\to2}_{\rm in}
    = \Delta E_{1\to2}+W_{1\to2}= NT_{\rm h}\ln\!\left(\frac{V_2-bN}{V_1-bN}\right)\,,
\end{align}
which also follows from
$ 
    Q^{1\to2}_{\rm in}=T_{\rm h}\Delta S_{1\to2}\,.
$ 
Thus the \(a\)-dependent terms cancel in the   heat input   along the isotherm.
Similarly, along the cold isotherm \(3\to4\),
\begin{align}
  W_{3 \to 4} &= - NT_{\rm c}\ln\!\left(\frac{V_2-bN}{V_1-bN}\right)
      - aN^2\left(\frac{1}{V_2}-\frac{1}{V_1}\right) \,, \\  Q^{3\to4}_{\rm out}
    &=NT_{\rm c}\ln\!\left(\frac{V_2-bN}{V_1-bN}\right)\,.
\end{align}
Along the isochores no work is done, so the heat exchange is entirely given by the change in internal energy. Equation \eqref{vdwenergy} yields
\begin{align}
    Q^{2\to3}_{\rm out}
    &=E_2-E_3
    =\frac{f}{2}N(T_{\rm h}-T_{\rm c})\,,\\
    Q^{4\to1}_{\rm in}
   &=E_1-E_4
    =\frac{f}{2}N(T_{\rm h}-T_{\rm c})\,.
\end{align}
\begin{figure}[t!]
    \centering

    \begin{subfigure}{0.45\textwidth}
        \centering
        \includegraphics[width=\linewidth]{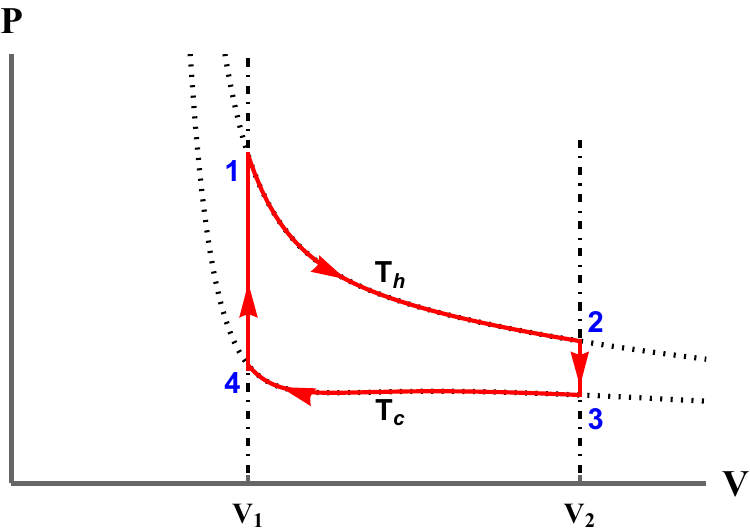}
        \caption{Here, the upper isotherm corresponds to a temperature above the critical point, $T_{\rm h} > T_{\text{crit}}$, while the lower isotherm is fixed at the critical temperature, $T_{\rm c } = T_{\rm crit}$.}
        \label{fig:sub1}
    \end{subfigure}
    \hfill
    \begin{subfigure}{0.45\textwidth}
        \centering
        \includegraphics[width=\linewidth]{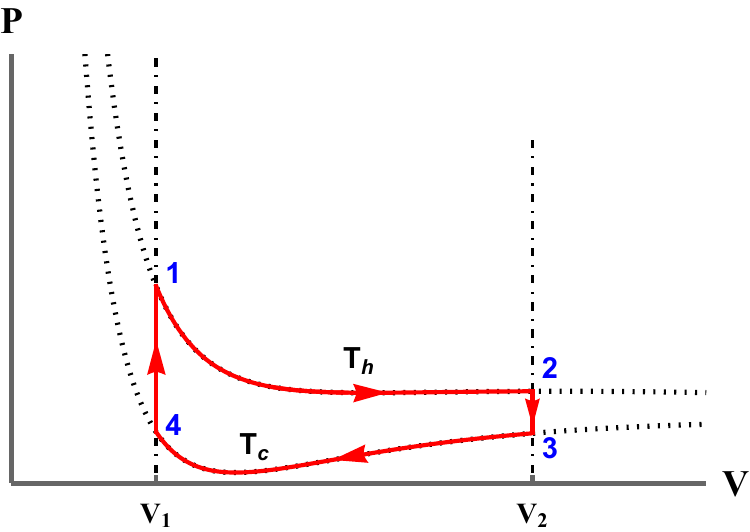}
        \caption{Here, the upper isotherm corresponds to the critical point, $T_{\rm h} = T_{\text{crit}}$, while the lower isotherm corresponds to a temperature below the critical point, $T_{\rm c} < T_{\text{crit}}$.}
        \label{fig:sub2}
    \end{subfigure}

    \caption{$PV$-diagrams for a Stirling cycle with a Van der Waals fluid working substance. This fluid has three qualitatively distinct isotherms in a $PV$-diagram: above, below and at the critical temperature $T_{\rm crit}$. Hence, there are 5 different ways in which one can operate the Stirling engine: (a) $T_{\rm h} > T_{\text{crit}}$, $T_{\rm c} = T_{\text{crit}}$ (b) $T_{\rm h} = T_{\text{crit}}$, $T_{\rm c} < T_{\text{crit}}$ (c) $T_{\rm h} > T_{\rm c} > T_{\text{crit}}$ (d) $T_{\rm c}<T_{\rm h}<T_{\text{crit}}$ (e) $T_{\rm h} > T_{\text{crit}} > T_{\rm c}$. We have presented the $PV$-diagrams for the first two cases. One can similarly obtain the other diagrams. The $TS$-diagram for a Van der Waals fluid is qualitatively identical to that of a classical ideal gas (see Figure \ref{fig:sub2idealgas}).}
    \label{fig:main}
\end{figure}
\noindent As in the ideal gas case, the two isochores exchange heat of equal magnitude.

Substituting into the definition of the efficiency,   equations \eqref{nonregen1} or \eqref{nonregen2}, gives
\begin{equation}
    \eta_{\rm non\text{-}regen}^{\rm VdW}
    =
    \frac{(T_{\rm h}-T_{\rm c})\ln\!\big((V_2-bN)/(V_1-bN)\big)}
    {T_{\rm h}\ln\!\big((V_2-bN)/(V_1-bN)\big)+\frac{f}{2}(T_{\rm h}-T_{\rm c})}\,,
\end{equation}
or, equivalently,
\begin{equation}
    \eta_{\rm non\text{-}regen}^{\rm VdW}
    =
    \left(
    \frac{1}{\eta_{\rm Carnot}}
    +\frac{f}{2\ln\!\big((V_2-bN)/(V_1-bN)\big)}
    \right)^{-1}\,.
\end{equation}
Therefore, the non-regenerative Stirling efficiency has the same form as for the ideal gas, except that the ordinary compression ratio is replaced by the ratio of available volumes, \((V_2-bN)/(V_1-bN)\). The efficiency does not depend on the interaction parameter $a$. \\

\noindent \textbf{In presence of regeneration.}
From \eqref{vdwenergy} one obtains that the heat capacity, 
\begin{equation}
    C_V=\left(\frac{\partial E}{\partial T}\right)_V=\frac{f}{2}N\,,
\end{equation}
  is   independent of volume. Therefore the criterion of the theorem is satisfied, and the heat exchanged along the two isochores has equal magnitude:
\begin{equation}
    Q^{2\to3}_{\rm out}=Q^{4\to1}_{\rm in}
    =\frac{f}{2}N(T_{\rm h}-T_{\rm c}).
\end{equation}
For an ideal regenerator, this heat is stored during \(2\to3\) and fully returned during \(4\to1\), so that
$ 
    Q_{\text{mis}}=0.
$  
Hence the only external heat input and output are those exchanged along the isotherms,
\begin{align}
    Q_{\rm in}
    &=NT_{\rm h}\ln\!\left(\frac{V_2-bN}{V_1-bN}\right),\\
    Q_{\rm out}
    &=NT_{\rm c}\ln\!\left(\frac{V_2-bN}{V_1-bN}\right),
\end{align}
while the work is unchanged,
\begin{equation}
    W=N(T_{\rm h}-T_{\rm c})\ln\!\left(\frac{V_2-bN}{V_1-bN}\right).
\end{equation}
It follows that the regenerative Stirling efficiency is
\begin{equation}
    \eta_{\rm regen}^{\rm VdW}
    =1-\frac{T_{\rm c}}{T_{\rm h}}.
\end{equation}
Thus, although the Van der Waals fluid includes interactions and excluded-volume effects, its regenerative Stirling efficiency is still equal to the Carnot efficiency. The reason is that \(C_V\) remains independent of volume, so the heat exchanged along the isochores is perfectly recycled by an ideal regenerator.

 \section{Stirling engine for quantum ideal gases}
\label{sec:quantumgases}

In this section we extend the analysis of the Stirling cycle to quantum ideal gases, with the aim of identifying deviations from the classical behavior. \textcolor{black}{Their volume-dependent heat capacity places them outside the sufficient condition of section~\ref{sec:stirlingtheorem}. The nonzero mismatch and the departure from Carnot efficiency are then established by the explicit calculations below; they do not follow from volume dependence alone.}

We consider quantum systems at fixed particle number throughout the cycle. For calculational convenience, however, we work in the grand canonical ensemble, treating the chemical potential as an independent variable and subsequently expressing it   in terms of the particle number. This procedure is justified by the equivalence of ensembles in the thermodynamic limit.

In our analysis we make two standard approximations. First, we replace the discrete sum over energy levels by a continuum integral using the density of states. Second, we neglect subleading finite-volume contributions. Both approximations are justified in the thermodynamic limit, where the energy level spacing becomes negligible and the system size is large compared to the thermal wavelength. There are many useful references on quantum ideal gases; in this work we have relied mostly on \cite{Pathria:1996hda}, but extended it to arbitrary dimensions.

\subsection{Ideal Bose gas}

We consider a non-relativistic gas of spin-zero, non-interacting bosons in $d$ spatial dimensions, confined to a volume $V$. 
The grand canonical partition function for an ideal Bose gas is given by 
\begin{equation}
\ln \mathcal{Z}
=
- \sum_{\epsilon} \ln \!\left(1-ze^{-\beta \epsilon}\right),
\qquad
\beta = \frac{1}{T}\,,
\qquad
z=e^{\beta \mu}\,,
\end{equation}
where the sum is over all single-particle energy eigenvalues $\epsilon$,   $\mu$ is the chemical potential and $z$ is the fugacity of the gas, satisfying $\mu \le0 $  and $0 \le z \le 1$.

The pressure and particle number, which we identify with its ensemble average,  follow from
\begin{equation}
P = \frac{T}{V} \ln \mathcal Z\,,
\qquad
N = z \left ( \frac{\partial}{\partial z} \ln \mathcal Z\right)_{T,V}
= \sum_{\epsilon} \frac{1}{z^{-1} e^{\beta \epsilon}-1}\,.
\end{equation}
Since the ground state $\epsilon=0$ may become macroscopically occupied, it must be separated before taking the thermodynamic limit. Writing
\begin{equation}
  N = \frac{z}{1-z} +  N_{\rm exc}   \equiv N_0 +   N_{\rm exc}   \,,
\label{eq:Bose-N-split}
\end{equation}
we identify $N_0$ as the ground-state occupation.
For the excited states, the sum may be replaced by an integral using the density of states,
\begin{equation}
\rho_d(\epsilon)
=
\frac{V}{\Gamma(d/2)}
\left(\frac{2\pi m}{h^2}\right)^{d/2}
\epsilon^{\frac{d}{2}-1}\,,
\end{equation}
where $m$ is the boson mass and $h$ is Planck's constant.
This yields
\begin{equation}
\frac{P}{T}
=
-\frac{1}{\Gamma(d/2)}
\left(\frac{2\pi m}{h^2}\right)^{d/2}
\int_0^\infty d\epsilon \,
\epsilon^{\frac{d}{2}-1}
\ln\!\left(1-ze^{-\beta\epsilon}\right)
+\frac{1}{V}\ln\!\left(\frac{1}{1-z}\right)\,,
\end{equation}
and
\begin{equation}
\frac{N}{V}
=
\frac{1}{\Gamma(d/2)}
\left(\frac{2\pi m}{h^2}\right)^{d/2}
\int_0^\infty d\epsilon \,
\frac{\epsilon^{\frac{d}{2}-1}}{z^{-1}e^{\beta \epsilon}-1}
+
\frac{1}{V}\frac{z}{1-z}\,.
\end{equation}
The last term in the pressure may be written as $\frac{1}{V}\ln(N_0+1)$ and therefore vanishes in the thermodynamic limit $V \to \infty$ at fixed density $N/V$, since it is at most of order $\mathcal O(\frac{1}{N}\ln N)$. The ground state hence does not contribute to the pressure in this limit.

Introducing $x=\beta\epsilon$, the above integrals can be expressed in terms of the Bose-Einstein functions
\begin{equation} \label{befunctions}
g_\nu(z)
=
\frac{1}{\Gamma(\nu)}
\int_0^\infty
\frac{x^{\nu-1}\,dx}{z^{-1}e^x-1} = z + \frac{z^2}{2^{\nu}} + \frac{z^3}{3^{\nu}} + \cdots\,,
\end{equation}
where the second equality follows from an expansion around $z=0.$
In terms of these functions, the particle number and pressure take the form
\begin{equation} \label{bosegasnumberpressure}
P
=
\frac{T}{\lambda_T^d} g_{d/2+1}(z)\,,\qquad 
\frac{N-N_0}{V} 
=
\frac{1}{\lambda_T^d} g_{d/2}(z)\,,
\end{equation}
where $\lambda_T = (h^2/2\pi m T)^{1/2}$ is the thermal de Broglie wavelength. Here, the ground-state contribution to the pressure has been neglected.
Using the equivalence of ensembles, we regard the fugacity as a function $z=z(T,V,N)$, implicitly determined by the second relation in \eqref{bosegasnumberpressure}. Although this relation cannot be inverted in closed form, it admits a systematic expansion for small $z$, using the expansion of the Bose-Einstein function \eqref{befunctions}.

The internal energy receives no contribution from the ground state and is given by
\begin{equation}
E
=
-\left(\frac{\partial }{\partial \beta} \ln \mathcal Z\right)_{z,V}
=
\frac{d}{2}\,\frac{VT}{\lambda_T^d}\, g_{d/2+1}(z)
=
\frac{d}{2}\,PV\,.
\end{equation}
Since we are interested in heat engines at fixed particle number, we express this in terms of $N$,
\begin{equation}
E
=
\frac{d}{2}\,TN
\left(1-\frac{N_0}{N}\right)
 \frac{g_{d/2+1}(z)}{g_{d/2}(z)}
\end{equation}
where the fugacity should be viewed as the function $z=z(T,V,N)$. 

The entropy follows from the Euler relation $S=(E+PV-\mu N)/T$, which holds in the thermodynamic limit, yielding
\begin{equation}
S
=
N\left[
\left(\frac{d}{2}+1\right)
\left(1-\frac{N_0}{N}\right)
\frac{g_{d/2+1}(z)}{g_{d/2}(z)}
-\ln z
\right]\,.
\end{equation}
The constant-volume heat capacity is obtained by differentiating
the energy at fixed $V$ and $N$. Using the second equation in \eqref{bosegasnumberpressure} to eliminate
$\partial z/\partial T$, one finds   for $z<1$
\begin{equation}
C_V= \left ( \frac{\partial E}{\partial T}\right)_{V,N}
=
N\,\frac{d}{2}\left[
\left(\frac{d}{2}+1\right)\frac{g_{1+d/2}(z)}{g_{d/2}(z)}
-
\frac{d}{2}\frac{g_{d/2}(z)}{g_{d/2-1}(z)}
\right].
\label{eq:Bose-CV}
\end{equation}
Thus $C_V$ depends explicitly on the fugacity $z$, and hence implicitly on the volume. In the classical regime $z\ll1$, one has $g_\nu(z)\approx z$, so
$C_V\approx \frac{d}{2}N$, independent of $z$ and of the volume. This agrees with the heat capacity \eqref{classicalidealgasheat} of a classical ideal gas with $f=d$.

A characteristic feature of the Bose gas for $d>2$ is the existence of a critical temperature for
Bose-Einstein condensation (BEC). This is defined by the condition that the excited states can
no longer accommodate all particles. Since $g_{d/2}(z)$ is maximal at $z=1$, the second  equation in \eqref{bosegasnumberpressure}
implies that the excited states saturate at
\begin{equation}
\frac{N}{V}
=
\frac{1}{\lambda_{T_{\rm crit}}^d} \zeta(d/2)\,,
\end{equation}
where   the Riemann zeta function $\zeta $ appears since $ g_{d/2}(1) = \zeta(d/2)$ for $d>2.$
Solving for $T$ gives the critical temperature
\begin{equation}
T_{\rm crit}
=
\frac{h^2}{2\pi m}
\left(
\frac{N/V}{\zeta(d/2)}
\right)^{2/d}\,.
\end{equation}
For $T> T_{\rm crit}$ one has $N_0 \ll N$, while for  $T<T_{\rm crit}$ a macroscopic fraction of particles occupies the ground state. Below and at the critical temperature, the fugacity reaches its maximal value $z=1$ in the thermodynamic limit, and the pressure, internal energy and entropy reduce to
\begin{equation} \label{criticalpressureetc}
P=AT^{d/2+1}\,,\quad E=\frac{d}{2}AVT^{d/2+1}\,, \quad \text{and} \quad S=\left(\frac{d}{2}+1\right)AVT^{d/2}\,,
\end{equation}
with $A=\left(\frac{2\pi m}{h^2}\right)^{d/2}\zeta\!\left(\frac{d}{2} +1\right).$
The constant-volume heat capacity is therefore
\begin{equation}
C_V=\left(\frac{\partial E}{\partial T}\right)_{V,N}
=
\frac d2\left(\frac d2+1\right)AVT^{d/2}=
N\,\frac d2\left(\frac d2+1\right)
\frac{\zeta(d/2+1)}{\zeta(d/2)}
\left(\frac{T}{T_{\rm crit}}\right)^{d/2}\,.
\label{eq:Bose-CV-BEC}
\end{equation}
 This depends on the volume, hence we do not expect the Carnot efficiency for a regenerative Stirling engine.

The order of the Bose-Einstein condensation phase transition may be inferred from the
behavior of the constant-volume heat capacity at $T=T_{\rm crit}$. At fixed $(V,N)$,
\[
C_V=-\,T\left(\frac{\partial^2 F}{\partial T^2}\right)_{V,N},
\]
so discontinuities in $C_V$ correspond to discontinuities in $\partial_T^2 F$.
Taking the limit $T\to T_{\rm crit}^{+}$ in \eqref{eq:Bose-CV}, one finds that for
$2<d\leq 4$ the function $g_{d/2-1}(z)$ diverges as $z\to 1$, so the second term in
\eqref{eq:Bose-CV} vanishes and $C_V$ matches continuously onto the condensed-phase
result \eqref{eq:Bose-CV-BEC}. Thus $\partial_T^2 F$ is continuous at $T_{\rm crit}$,
and the first non-analytic behavior appears only in a higher temperature derivative of $F$.
The transition is therefore continuous and higher than second order  (for $d=3$, it is third order).

For $d>4$, $g_{d/2-1}(1)=\zeta(d/2-1)$ is finite, so $C_V$ exhibits a finite jump at
$T_{\rm crit}$. Hence $\partial_T^2 F$ is discontinuous, while $\partial_T F=-S$
remains continuous, implying that $F$ is once but not twice differentiable at
$T_{\rm crit}$. The transition is therefore second order. The case $d=4$ is marginal, with logarithmic corrections in $C_V$ at $T_{\rm crit}$ arising
from the divergence of $g_1(z)$ as $z \to 1$.\\

\noindent \textbf{In absence of regeneration.}
As already noted, in the canonical ensemble the fugacity \(z\) depends on the temperature,   volume and particle number.  Consequently, at fixed particle number \(z\) varies along both isotherms and isochores.  Heat exchanged along the isotherms is given by \(Q=T\Delta S\), while along the isochores it is \(Q=\Delta E\).  Using the   expressions for the energy and entropy above, the heat exchanged on the cold and hot isochores is
\begin{equation}
\begin{aligned}
Q_{\rm out}^{2\rightarrow 3} &= E_{2}-E_{3}
= \frac{d}{2}N\Bigl(T_{\rm h}\,g_{22}-T_{\rm c}\,g_{12}\Bigr)\,,\\[0.2em]
Q_{\rm in}^{4\rightarrow 1} &= E_{1}-E_{4}
= \frac{d}{2}N\Bigl(T_{\rm h}\,g_{21}-T_{\rm c}\,g_{11}\Bigr)\,,
\end{aligned}
\end{equation}
where
\begin{equation}
g_{ij}\equiv \Bigl(1-\frac{N_{0}}{N}\Bigr)\frac{g_{d/2+1}(z_{ij})}{g_{d/2}(z_{ij})}\qquad \text{with}\qquad z_{ij} \equiv z(T_i, V_j,N)\,,
\end{equation}
and $T_1 \equiv T_{\rm c}$ and $T_2 \equiv T_{\rm h}$. 
Along the isotherms one finds 
\begin{equation}
\begin{aligned}
Q_{\rm in}^{1\rightarrow 2} &= T_{\rm h}\bigl(S_{2}-S_{1}\bigr)
= NT_{\rm h}\bigl(\tilde{g}_{22}-\tilde{g}_{21}\bigr),\\[0.2em]
Q_{\rm out}^{3\rightarrow 4} &= T_{\rm c}\bigl(S_{3}-S_{4}\bigr)
= NT_{\rm c}\bigl(\tilde{g}_{12}-\tilde{g}_{11}\bigr)\,,
\end{aligned}
\end{equation}
with \begin{equation}\tilde{g}_{ij}\equiv \bigl(\tfrac{d}{2}+1\bigr)\,g_{ij}-\ln z_{ij}\,.\end{equation}  Combining these heat flows yields the non-regenerative Stirling efficiency for an ideal Bose gas
\begin{equation} \label{efficiencyboseidealgasnonregen}
\eta_{\rm non\text{-}regen}^{\rm Bose}
=1-\frac{Q_{\rm out}^{3\rightarrow 4}+Q_{\rm out}^{2\rightarrow 3}}
{Q_{\rm in}^{1\rightarrow 2}+Q_{\rm in}^{4\rightarrow 1}}
=1-\frac{T_{\rm c}\bigl(\tilde{g}_{12}-\tilde{g}_{11}\bigr)
+\tfrac{d}{2}\,\bigl(T_{\rm h}\,g_{22}-T_{\rm c}\,g_{12}\bigr)}
{T_{\rm h}\bigl(\tilde{g}_{22}-\tilde{g}_{21}\bigr)
+\tfrac{d}{2}\,\bigl(T_{\rm h}\,g_{21}-T_{\rm c}\,g_{11}\bigr)}\,.
\end{equation}
This formula holds for all temperatures. Below the critical temperature, the ideal Bose gas is in the Bose-Einstein condensed phase. The condensate contributes via $N_0$ only to the particle number and not to the pressure, internal energy, or entropy, which are given by the relations \eqref{criticalpressureetc}. Substituting these relations  into the non-regenerative Stirling efficiency gives
\begin{equation}
\eta_{\rm non\text{-}regen}^{\rm BEC}
=
\frac{\big(T_{\rm h }^{d/2+1}-T_{\rm c}^{d/2+1}\big)(V_2-V_1)}
{\left ( \frac{d}{2}+1 \right)T_{\rm h }^{d/2+1}(V_2-V_1)+\frac{d}{2}V_1\big(T_{\rm h }^{d/2+1}-T_{\rm c}^{1+d/2}\big)}\,.
\end{equation}
When the temperature is well above the critical value, $T\gg T_{\rm crit}$, the fugacity is small, $z\ll1$, so that $g_\nu(z)\approx z$ and $N_0/N\approx  0$. It follows that $g_{ij}\approx  1$ and $\tilde g_{ij}\approx  \frac d2+1-\ln z_{ij}$. The particle density reduces to $N/V\approx z/\lambda_T^d$, so that along an isotherm $z\propto V^{-1}$ and hence $\tilde g_{i2}-\tilde g_{i1}=\ln(V_2/V_1)$. Substituting into the efficiency \eqref{efficiencyboseidealgasnonregen}, the expression reduces to the classical ideal gas result \eqref{nonregenidealgas}, with the number of degrees of freedom given by $f=d$, corresponding to the number of translational   degrees of freedom  in $d$ spatial dimensions. \\

\noindent \textbf{In presence of regeneration.}
With a regenerator, heat released on the cold isochore can be stored and reused on the subsequent hot isochore. For an ideal Bose gas, however, the amounts of heat released during the cooling step \(2\rightarrow 3\) and required during the heating step \(4\rightarrow 1\) are not equal, because \(g_{22}\neq g_{21}\) when \(z\) depends on both \(T\) and \(V\).
A direct evaluation shows that, in the normal phase \(0<z<1\) above the critical temperature,
\begin{equation}\label{inequalitybose}
Q^{4\to1}_{\rm in}>Q^{2\to3}_{\rm out}\,.
\end{equation}
This follows from
\begin{equation} \label{eq:mismatchheatcapacity}
Q^{4\to1}_{\rm in}-Q^{2\to3}_{\rm out}
=
\int_{T_{\rm c}}^{T_{\rm h}}\!\big(C_V(T,V_1,N)-C_V(T,V_2,N)\big)\,dT.
\end{equation}
At fixed \((T,N)\), the fugacity is determined by \(N/V=\lambda_T^{-d}g_{d/2}(z)\), so that
\(V_1<V_2\) implies \(z(T,V_1,N)>z(T,V_2,N)\). In the normal phase the heat capacity increases
with the degree of quantum degeneracy, i.e.\ with \(z\), hence
$C_V(T,V_1,N)>C_V(T,V_2,N)$ and therefore $Q^{4\to1}_{\rm in}>Q^{2\to3}_{\rm out}$. \textcolor{black}{This pointwise inequality also verifies the fixed-sign condition underlying the use of $Q_{\rm mis}$ in this phase.}

In the classical regime \(z\ll1\), we saw above that the heat capacity is independent of both \(z\) and \(V\). The integrand in \eqref{eq:mismatchheatcapacity} therefore vanishes and the two heat exchanges coincide,
\(Q^{4\to1}_{\rm in}=Q^{2\to3}_{\rm out}\), implying \(Q_{\rm mis}=0\).
By contrast, below the critical temperature the fugacity saturates at \(z=1\), and the internal energy scales as
\(E=\frac{d}{2}AVT^{1+d/2}\). It follows that
\(
Q^{2\to3}_{\rm out}\propto V_2
\)
and
\(
Q^{4\to1}_{\rm in}\propto V_1
\),
so that for \(V_2>V_1\) the inequality is reversed,
\(Q^{4\to1}_{\rm in}<Q^{2\to3}_{\rm out}\).

\textcolor{black}{The formulae below apply directly when the complete isochoric temperature interval lies in a regime with one fixed sign of the local mismatch (normal or condensed). If an isochore crosses a temperature at which that sign reverses, the interval must be split there and the local surplus and deficit retained separately, as explained in section~\ref{sec:2.1}; the single net quantity $Q_{\rm mis}$ is then not sufficient for a
reversible treatment.}

\begin{figure}[t]
    \centering

    \begin{subfigure}[t]{0.45\textwidth}
        \centering
        \includegraphics[width=\linewidth]{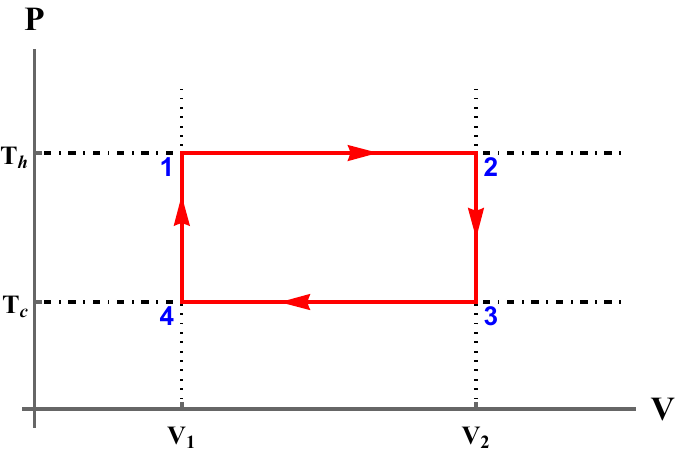}
        \caption{$PV$-diagram for BEC. Note that the isotherms coincide with the isobars, as follows from \eqref{criticalpressureetc}.}
        \label{fig:sub1}
    \end{subfigure}
    \hfill
    \begin{subfigure}[t]{0.45\textwidth}
        \centering
        \includegraphics[width=\linewidth]{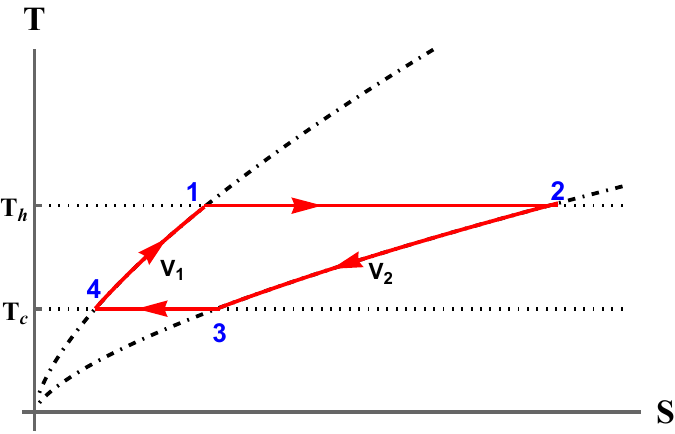}
        \caption{$TS$-diagram for BEC. }
        \label{fig:sub2idealgas}
    \end{subfigure}

    \caption{Stirling cycle for a Bose-Einstein condensate.}
    \label{fig:main}
\end{figure}

In the normal phase, the regenerator therefore cannot store enough heat to raise the gas back to \(T_{\rm h}\) on its own, and an additional amount
\begin{equation}
Q_{\rm mis}=|Q_{\rm in}^{4\rightarrow 1}-Q_{\rm out}^{2\rightarrow 3}|
\end{equation}
must be supplied externally at the corresponding temperatures. In the presence of a regenerator, the isothermal heat exchanges remain unchanged, so \(Q_{\rm out}=Q_{\rm out}^{3\rightarrow 4}\) and \(Q_{\rm in}=Q_{\rm in}^{1\rightarrow 2}+Q_{\rm mis}\). Thus the regenerative Stirling efficiency for an ideal Bose gas becomes
\begin{equation}
\eta_{\rm regen}^{\rm Bose}
=1-\frac{Q_{\rm out}^{3\rightarrow 4}}
{Q_{\rm in}^{1\rightarrow 2}+Q_{\rm mis}}
=1-\frac{T_{\rm c}\bigl(\tilde{g}_{12}-\tilde{g}_{11}\bigr)}
{T_{\rm h}\bigl(\tilde{g}_{22}-\tilde{g}_{21}\bigr)
+\tfrac{d}{2}\left[\bigl(T_{\rm h}\,g_{21}-T_{\rm c}\,g_{11}\bigr)
-\bigl(T_{\rm h}\,g_{22}-T_{\rm c}\,g_{12}\bigr)\right]}\,.
\end{equation}
The bracketed term in the denominator reflects the mismatch between the heat released and absorbed along the isochores. \textcolor{black}{The explicit calculation of the isochoric heat exchanges shows that the
regenerative efficiency remains below Carnot at finite quantum degeneracy. In the classical regime $T\gg T_{\rm crit}$, one has $z\ll1$ and $C_V$ becomes volume independent to leading order, so the local isochoric mismatch vanishes pointwise; correspondingly, $Q_{\rm mis}\approx0$ and $\tilde g_{12}-\tilde g_{11}\approx\tilde g_{22}-\tilde g_{21}$.} In this regime the efficiency reduces to the classical ideal gas result \eqref{idealgasregenerative} obtained earlier.

Below and at the critical temperature, where $z=1$, the energy is linear in the
volume, $E\propto V T^{1+d/2}$. Along an isochore $Q=\Delta E \propto V$, and since
the temperature difference is the same on both branches, the heat exchanged
differs only by the volume factor. Because $V_2>V_1$, the heat released during
cooling exceeds that required during heating, so the mismatch contributes to the
heat expelled, and the regenerative efficiency reduces to
\begin{equation}
\eta_{\rm regen}^{\rm BEC}
=1-\frac{Q_{\rm out}^{3\rightarrow 4}+Q_{\rm mis}}
{Q_{\rm in}^{1\rightarrow 2}}
=
\frac{1-\left(T_{\rm c}/T_{\rm h}\right)^{1+d/2}}{1+d/2}\,.
\end{equation}
Thus,   in the presence of a regenerator, the Bose-Einstein condensed phase does not attain the Carnot efficiency, as expected from \eqref{eq:Bose-CV-BEC}.

 \subsection{Ideal Fermi gas}

We now consider a non-relativistic ideal Fermi gas in $d$ spatial dimensions, whose grand canonical partition function is defined by
\begin{equation}
\ln \mathcal Z= \sum_{\epsilon} \ln\!\left(1 + z e^{-\beta \epsilon}\right)\,,
\qquad
z = e^{\beta \mu}.
\end{equation}
In contrast to the Bose gas, the fugacity is not bounded from above, i.e. $0\le z < \infty$. This reflects the Pauli exclusion principle, which limits the occupation number of each single-particle state to at most unity. As a result, even as $z\to\infty$ the occupation numbers remain finite and no macroscopic population of a single state occurs, so no separate ground-state contribution is required.

Passing to the continuum using the same density of states as in the Bose gas case, the thermodynamic quantities can be expressed in terms of the Fermi-Dirac functions  
\begin{equation}
f_\nu(z) = \frac{1}{\Gamma(\nu)} \int_0^\infty dx \, \frac{x^{\nu-1}}{z^{-1} e^x + 1}
= z - \frac{z^2}{2^\nu} + \frac{z^3}{3^\nu} - \cdots\,,
\end{equation}
where the second equality is the expansion around $z=0$.
In particular, the pressure and particle number in the thermodynamic limit are
\begin{equation} \label{pressurenumberfermi}
P = \frac{T}{\lambda_T^d} f_{d/2+1}(z), 
\qquad 
\frac{N}{V} = \frac{1}{\lambda_T^d} f_{d/2}(z)\,.
\end{equation}
Switching to the canonical ensemble, the second relation determines the fugacity as a function of $(T,V,N)$, i.e. $z=z(T,V,N)$.

The internal energy and entropy take the form
\begin{align}
E &= \frac{d}{2} N T \frac{f_{d/2+1}(z)}{f_{d/2}(z)} = \frac{d}{2}PV, \\
S &= N\left[\left(\frac{d}{2}+1\right)\frac{f_{d/2+1}(z)}{f_{d/2}(z)} - \ln z \right]\,.
\end{align}
Further, the constant-volume heat capacity depends on the fugacity and therefore implicitly on the volume,
\begin{equation} \label{heatcapacityfermi}
C_V
=
N\frac{d}{2}
\left[
\left(\frac{d}{2}+1\right)\frac{f_{d/2+1}(z)}{f_{d/2}(z)}
-
\frac{d}{2}\frac{f_{d/2}(z)}{f_{d/2-1}(z)}
\right]\,.
\end{equation}
This expression is valid for all temperatures.

Although there is no condensation phase for a Fermi gas, there is a quantum degeneracy regime characterized by the Fermi temperature $T_F$. This scale is defined from the zero-temperature limit of the    particle number relation.
At finite temperature, the occupation of a single-particle state of energy $\epsilon$ is given by the Fermi-Dirac distribution
\begin{equation}
n(\epsilon) = \frac{1}{e^{\beta(\epsilon-\mu)}+1}\,.
\end{equation}
In the limit $T \to 0$, this distribution becomes a step function: for $\epsilon < \mu$ one has $n(\epsilon)\to 1$, while for $\epsilon > \mu$ one has $n(\epsilon)\to 0$. The system therefore fills all available states up to a maximum energy $\varepsilon_F = \mu(T=0)$, known as the Fermi energy.
The particle number is then obtained by integrating over all occupied states,
\begin{equation}
\frac{N}{V}
=
\int_0^\infty d\epsilon \, \rho_d(\epsilon)\, n(\epsilon)
=
\int_0^{\varepsilon_F} d\epsilon \, \rho_d(\epsilon)\,,
\end{equation}
which determines the Fermi energy as
\begin{equation}
\varepsilon_F
=
\frac{h^2}{2m}
\left[
\frac{\Gamma(d/2+1)}{\pi^{d/2}}
\frac{N}{V}
\right]^{2/d},
\qquad
T_F \equiv \varepsilon_F\,.
\end{equation}
The degenerate regime corresponds to $T\ll T_F$, or equivalently $z\gg 1$. In this regime the Fermi-Dirac distribution remains close to a step function, and only states within an energy window of order $T$ around the Fermi energy are thermally excited.  Performing a low-temperature expansion then yields
\begin{equation}
E_0=\frac{d}{d+2}N\varepsilon_F\,,
\end{equation}
and the low-temperature energy becomes
\begin{equation}
E
=
E_0+\frac{\pi^2}{12}dN\frac{T^2}{T_F}
+\mathcal O\!\left(\frac{T^4}{T_F^3}\right)\,.
\end{equation}
It follows that the heat capacity at low temperatures is
\begin{equation} \label{heatcapacityfermideg}
C_V
=
\frac{\partial E}{\partial T}
=
\frac{\pi^2}{6}dN\frac{T}{T_F}
+\mathcal O\!\left(\frac{T^3}{T_F^3}\right)\,.
\end{equation}
Since $T_F\propto (N/V)^{2/d}$, the heat capacity depends explicitly on  the volume, even for $T\ll T_F$. \\

\noindent \textbf{In absence of regeneration.}
As in the Bose gas case, heat exchange along the isotherms is given by $Q=T\Delta S$, while along the isochores it is given by $Q=\Delta E$. Since the fugacity depends on both temperature and volume, it varies along all segments of the cycle.

It is convenient to introduce the shorthand
\begin{equation}
f_{ij} = \frac{f_{d/2+1}(z_{ij})}{f_{d/2}(z_{ij})},
\qquad
\tilde f_{ij} = \left(\frac{d}{2}+1\right) f_{ij} - \ln z_{ij}\,,
\end{equation}
where $z_{ij} \equiv z(T_i,V_j,N)$ and   $T_1\equiv T_{\rm c}$ and $T_2\equiv T_{\rm h}$. In terms of these quantities, the heat exchanged along each branch can be written in a compact form, and combining the contributions from the isothermal and isochoric segments yields
\begin{equation}
\eta_{\rm non\text{-}regen}^{\rm Fermi}
=
1 -
\frac{
T_{\rm c}(\tilde f_{12} - \tilde f_{11})
+
\frac{d}{2}(T_{\rm h} f_{22} - T_{\rm c} f_{12})
}{
T_{\rm h}(\tilde f_{22} - \tilde f_{21})
+
\frac{d}{2}(T_{\rm h} f_{21} - T_{\rm c} f_{11})
}\,.
\end{equation}
This expression is structurally identical to the Bose gas result, with the replacement of Bose-Einstein functions by Fermi-Dirac functions. In the classical limit $z \ll 1$ it reduces to the efficiency of a classical ideal gas, cf. \eqref{nonregenidealgas}.\\

\noindent \textbf{In presence of regeneration.}
In the presence of regeneration, however, there is a key difference compared to the case of an ideal Bose gas. As in the Bose gas case, the heat exchanges along the isochores do not match: $Q^{\text{out}}_{2 \rightarrow 3} \neq Q^{\text{in}}_{4 \rightarrow 1}$. \textcolor{black}{For the ideal Fermi gas the Bose gas inequality~\eqref{inequalitybose} is reversed. 
At fixed $(T,N)$, decreasing the volume increases the fugacity and the degree
of degeneracy. Numerical evaluation of the exact heat capacity
expression~\eqref{heatcapacityfermi} for integer dimensions
$2\leq d\leq10$ shows that $C_V$ decreases with fugacity. This is also consistent with the
degenerate-limit result $C_V\propto T/T_F$, since
$T_F\propto(N/V)^{2/d}$. Thus,   $V_1<V_2$ implies
\begin{equation}
C_V(T,V_1,N)<C_V(T,V_2,N)\,.
\end{equation}
The heat absorbed along the isochore at $V_1$ is consequently smaller than
the heat released along the isochore at $V_2$. The local mismatch has the same
sign throughout the temperature interval, and hence
\begin{equation}
Q_{\rm in}^{4\to1}<Q_{\rm out}^{2\to3}\,.
\end{equation}
Thus, the regenerator stores more heat during isochoric cooling than is
required during isochoric heating, and the excess
$
Q_{\rm mis}
=
Q_{\rm out}^{2\to3}-Q_{\rm in}^{4\to1}
 $
must be expelled externally at the temperatures at which it occurs.}

Taking this mismatch into account, one finds that the efficiency becomes
\begin{equation}
\eta_{\rm regen}^{\rm Fermi}
=
1 -
\frac{
T_{\rm c}(\tilde f_{12} - \tilde f_{11})
+
\frac{d}{2}
\Big[
(T_{\rm h} f_{22} - T_{\rm c} f_{12})
-
(T_{\rm h} f_{21} - T_{\rm c} f_{11})
\Big]
}{
T_{\rm h}(\tilde f_{22} - \tilde f_{21})
}\,.
\end{equation}
In the  classical regime $T \gg T_F$, one has $z \ll 1$ and $f_\nu(z)\approx z$. The dependence on the fugacity drops out, the heat capacity becomes independent of the volume, and the result  reduces  to the Carnot efficiency of a classical ideal gas.

\section{Stirling engine for thermal CFTs}
\label{sec:thermalcfts}

Thermal states of conformal field theories provide a natural class of working
substances for heat engines. On the one hand, they describe relativistic
many-body systems at fixed points, whose equilibrium thermodynamics is
strongly constrained by conformal symmetry. On the other hand, in holographic settings
these states admit a dual black hole description, so that heat engines
constructed from CFT thermal states probe black hole thermodynamics from the
boundary perspective.

For a CFT, scale invariance implies
that the stress-energy tensor is traceless:
$ 
T^\mu{}_\mu=0\,.
$ For a thermal CFT in global equilibrium on a spatially homogeneous geometry,
the stress-energy tensor is isotropic and the pressure is spatially constant: $T_{ij}= P \delta_{ij}$. The tracelessness condition then 
 gives $-\epsilon +(D-1)P=0$, where $\epsilon=E/V$ is the energy
density and $D$ is the number of spacetime dimensions. Therefore, 
\begin{equation}  
E=(D-1)PV\, .
\label{7.1}
\end{equation}
This is the conformal equation of state. It is sufficient to determine the efficiency of
 several standard heat engines, such as  the Otto, Diesel and Brayton engines \cite{Lilani:2025gzt}. For  the Stirling cycle the   efficiency is also fixed by the conformal equation of state, but its expression does not depend solely on the characteristic parameters $ T$ and $V$, as we will explain below. 
\\

\noindent \textbf{In absence of regeneration.}
We first consider the CFT Stirling cycle without a regenerator, for which the efficiency was computed   in \cite{Lilani:2025gzt}. We review this derivation here.  
Along the isothermal branches, Clausius'
relation gives
\begin{align}
Q^{1\to 2}_{\rm in} &= T_{\rm h}(S_2-S_1)\,, \label{7.2}\\
Q^{3\to 4}_{\rm out} &= T_{\rm c}(S_3-S_4)\,. \label{7.3}
\end{align}
Along the isochores no work is done, so the heat exchanged equals  the change in
internal energy:
\begin{align}
Q^{2\to 3}_{\rm out} &= E_2-E_3=(D-1)V_2(P_2-P_3)\,, \label{7.4}\\
Q^{4\to 1}_{\rm in} &= E_1-E_4=(D-1)V_1(P_1-P_4)\,, \label{7.5}
\end{align}
where the conformal equation of state \eqref{7.1} was used in the last equality. 
Inserting these heat exchanges into formula \eqref{nonregen2} for  the non-regenerative Stirling efficiency yields
\begin{equation}
\eta_{\rm non\text{-}regen}^{\rm CFT}
=
1-
\frac{T_{\rm c}(S_3-S_4)+(D-1)V_2(P_2-P_3)}
{T_{\rm h}(S_2-S_1)+(D-1)V_1(P_1-P_4)} .
\label{7.10}
\end{equation}
This formula is completely general for a CFT, but it is not yet expressed purely in terms
of the characteristic variables of the Stirling cycle, namely the temperatures and volumes.
To do that, we need explicit expressions for $S(T,V)$ and $P(T,V)$.

We therefore place the CFT on a round sphere of radius $R$, with volume
$
V=\Omega_{k,D-1}R^{D-1}\,.
$
For a CFT on a sphere, scale invariance implies that dimensionless thermodynamic
quantities depend on $T$ and $V$ only through the combination $TR$. In particular,
the canonical (Helmholtz) free energy admits the high-temperature (or large-volume) expansion \cite{Kutasov:2000td}
\begin{equation}
-FR
=
a_D(2\pi TR)^D+a_{D-2}(2\pi TR)^{D-2}
+a_{D-4}(2\pi TR)^{D-4}+\cdots .
\label{7.12}
\end{equation}
Here, the coefficients $a_D, a_{D-2}$ etc. are constants, but in general they may depend on the conserved charges $\mathcal Q_i
$. At strong coupling this expansion generally contains an infinite series in powers of
$(TR)^{-1}$, while at zero coupling the coefficients of the negative powers vanish.
Non-perturbative terms are exponentially suppressed for $TR\gg1$ and will be neglected.

The entropy follows from differentiating \eqref{7.12} with respect to the temperature at fixed volume
\begin{equation}  
S=-\left(\frac{\partial F}{\partial T}\right)_V = \frac{a_DD(2\pi)^D}{\Omega_{k,D-1}}\,T^{D-1}V  \xi (T,V) \,,
\label{7.13}
\end{equation}
where we defined
\begin{equation}
  \xi(T, V)
\equiv
1+\frac{a_{D-2}(D-2)}{a_DD(2\pi)^2T^2}
\left(\frac{\Omega_{k,D-1}}{V}\right)^{\frac{2}{D-1}}
+\frac{a_{D-4}(D-4)}{a_DD(2\pi)^4T^4}
\left(\frac{\Omega_{k,D-1}}{V}\right)^{\frac{4}{D-1}}
+\cdots .
\label{7.15}
\end{equation}
The entropy differences are then equal to
\begin{align}
S_2-S_1
&=
\frac{a_DD(2\pi)^D}{\Omega_{k,D-1}}\,T_{\rm h}^{D-1}
\left(V_2\xi_{22}-V_1\xi_{21}\right)\,, \label{7.16}\\
S_3-S_4
&=
\frac{a_DD(2\pi)^D}{\Omega_{k,D-1}}\,T_{\rm c}^{D-1}
\left(V_2\xi_{12}-V_1\xi_{11}\right)\,, \label{7.17}
\end{align}
where $\xi_{ij} \equiv \xi (T_i, V_j)\,$ and $T_1 \equiv T_{\rm c}$ and $T_2\equiv T_{\rm h}$. 
Next, the pressure is obtained from differentiating the free energy with respect to the volume at fixed temperature
\begin{equation}
P=-\left(\frac{\partial F}{\partial V}\right)_T = \frac{a_D(2\pi)^D}{\Omega_{k,D-1}}\,T^D \chi(T,V)\, ,
\label{7.18}
\end{equation}
where we defined
\begin{equation}
\chi (T,V)
\equiv
1+\frac{a_{D-2}(D-3)}{a_D(D-1)(2\pi)^2T^2}
\left(\frac{\Omega_{k,D-1}}{V}\right)^{\frac{2}{D-1}}
+\frac{a_{D-4}(D-5)}{a_D(D-1)(2\pi)^4T^4}
\left(\frac{\Omega_{k,D-1}}{V}\right)^{\frac{4}{D-1}}
+\cdots .
\label{7.21}
\end{equation}
Hence,
\begin{align}
P_2-P_3
&=
\frac{a_D(2\pi)^D}{\Omega_{k,D-1}}
\left(T_{\rm h}^D\chi_{22}-T_{\rm c}^D\chi_{12}\right)\,, \label{7.22}\\
P_1-P_4
&=
\frac{a_D(2\pi)^D}{\Omega_{k,D-1}}
\left(T_{\rm h}^D\chi_{21}-T_{\rm c}^D\chi_{11}\right)\,. \label{7.23}
\end{align}
Substituting \eqref{7.16}-\eqref{7.23} into \eqref{7.10}, the overall prefactor cancels,
and the efficiency becomes
\begin{equation}
\eta_{\rm non\text{-}regen}^{\rm CFT}
=
1-
\frac{
T_{\rm c}^D\left(V_2\xi_{12}-V_1\xi_{11}\right)
+\frac{D-1}{D}V_2\left(T_{\rm h}^D\chi_{22}-T_{\rm c}^D\chi_{12}\right)
}{
T_{\rm h}^D\left(V_2\xi_{22}-V_1\xi_{21}\right)
+\frac{D-1}{D}V_1\left(T_{\rm h}^D\chi_{21}-T_{\rm c}^D\chi_{11}\right)
} \,.
\label{7.24}
\end{equation}
This is the high-temperature or large-volume expansion of the non-regenerative Stirling
efficiency for a general CFT on a round sphere.
In the strict planar limit $TR\to\infty$, the finite-size corrections disappear, so that
$\xi_{ij}\to1$ and $\chi_{ij}\to1$. The efficiency then reduces to
\begin{equation}
\eta_{\rm non\text{-}regen}^{\rm CFT, plane}
=
1-
\frac{
T_{\rm c}^D(V_2-V_1)+\frac{D-1}{D}V_2(T_{\rm h}^D-T_{\rm c}^D)
}{
T_{\rm h}^D(V_2-V_1)+\frac{D-1}{D}V_1(T_{\rm h}^D-T_{\rm c}^D)
} \,.
\label{7.25}
\end{equation}\\

\noindent \textbf{In presence of regeneration.}
We next consider the same thermal CFT, but now in the presence of a regenerator.
We first need to compute the mismatch between the
heat released during   isochoric cooling, $Q_{\rm out}^{2 \to 3}$ in \eqref{7.2},  and the heat absorbed during  isochoric
heating, $Q_{\rm in}^{4 \to 1}$ in \eqref{7.3}. The general expression for the regenerative Stirling efficiency
depends on which of these two quantities is larger. In the present case, the
high-temperature or large-volume expansion shows that
\begin{equation}
Q^{2\to3}_{\rm out}>Q^{4\to1}_{\rm in}\,,
\end{equation}
\textcolor{black}{since the leading local heat capacity contribution scales as $VT^{D-1}$ and $V_2>V_1$. Thus the local mismatch has a fixed sign in the high-temperature or large-volume regime: the regenerator stores more heat during $2\to3$ than is needed during $4\to1$. The excess is rejected externally at the corresponding temperatures, so that}
\begin{equation}
\eta_{\rm regen} 
=
1-\frac{T_{\rm c}(S_3-S_4)+Q_{\text{mis}}}{T_{\rm h}(S_2-S_1)}\,,
\qquad
Q_{\text{mis}}=Q^{2\to3}_{\rm out}-Q^{4\to1}_{\rm in}\,.
\label{7.27a}
\end{equation}
Using the conformal equation of state, this becomes
\begin{equation}
\eta_{\rm regen}^{\rm CFT}
=
1-\frac{T_{\rm c}(S_3-S_4)+(D-1)\left[V_2(P_2-P_3)-V_1(P_1-P_4)\right]}
{T_{\rm h }(S_2-S_1)}\,.
\label{7.27b}
\end{equation}
We now specialize to the high-temperature expansion introduced above. Using the expressions
for $S(T,V)$ and $P(T,V)$, we obtain
\begin{equation}
Q_{\text{mis}} =
\frac{a_D(D-1)(2\pi)^D}{\Omega_{k,D-1}}
\left[
V_2\left(T_{\rm h}^D\chi_{22}-T_{\rm c}^D\chi_{12}\right)
-
V_1\left(T_{\rm h}^D\chi_{21}-T_{\rm c}^D\chi_{11}\right)
\right]\,,
\label{7.28}
\end{equation}
and therefore
\begin{equation}
\eta_{\rm regen}^{\rm CFT}
=
1-
\frac{
T_{\rm c}^D\left(V_2\xi_{12}-V_1\xi_{11}\right)
+\frac{D-1}{D}
\left[
V_2\left(T_{\rm h }^D\chi_{22}-T_{\rm c}^D\chi_{12}\right)
-
V_1\left(T_{\rm h }^D\chi_{21}-T_{\rm c}^D\chi_{11}\right)
\right]
}{
T_{\rm h}^D\left(V_2\xi_{22}-V_1\xi_{21}\right)
}\,.
\label{7.33}
\end{equation}
In the flat-space limit, $\xi_{ij}\to1$ and $\chi_{ij}\to1$, so the efficiency simplifies to
\begin{equation}
\eta_{\rm regen}^{\rm CFT}
=
\frac{1}{D}\left(1-\left(\frac{T_{\rm c}}{T_{\rm h}}\right)^D\right)\,.
\label{7.34}
\end{equation}

Thus, unlike the classical ideal gas, a thermal CFT does not generically admit perfect
regeneration, and even in the flat-space limit the regenerative Stirling efficiency does
not coincide with the Carnot value.

This can be understood from the criterion derived in section~\ref{sec:stirlingtheorem}. For a thermal CFT,   one finds from
\eqref{7.13} in the high-temperature or large-volume expansion  that
\begin{equation}
C_V = T\left(\frac{\partial S}{\partial T}\right)_V \propto V T^{D-1} + \cdots\,,\end{equation}
so that $C_V$ depends explicitly on the volume. This follows from scale invariance, since $C_V$ can only depend on the temperature via the dimensionless product $TR$. \textcolor{black}{Volume dependence by itself only shows that the sufficient theorem does not
apply; the explicit calculation of the isochoric heat exchanges establishes
that the two isochores do not cancel and that the regenerative efficiency
remains below Carnot.}

Finally, by plotting the efficiencies  on a 3D plot for $\mathcal{N} = 4$ SYM theory at zero 't Hooft coupling (i.e. $a_4/a_2 = -1/6$) and at infinite 't Hooft coupling (i.e. $a_4/a_2 = -1/12$), both in the presence and absence of regeneration, we find: \begin{equation}\eta^{\text{regen}}_{\lambda =0} > \eta^{\text{regen}}_{\lambda \rightarrow \infty} > \eta^{\text{non-regen}}_{\lambda =0} > \eta^{\text{non-regen}}_{\lambda \rightarrow \infty}\,.\end{equation} From this, we deduce that regeneration and weak coupling enhance the efficiency of the CFT Stirling engine.

\section{Stirling engine for   holographic CFTs}
\label{sec:holographiccfts}
\subsection{Thermal state dual to AdS-Schwarzschild black hole}

We now turn to thermal CFT states that admit a holographic dual description in terms of
AdS  black holes. This provides an exact realization of the thermal CFT
thermodynamics discussed in the previous section, allowing us to go beyond the
high-temperature or large-volume expansion. In particular, the holographic dictionary  determines the
functions $S(T,V)$ and $P(T,V)$ exactly for specific AdS black holes, and hence gives an exact
expression for the Stirling efficiency.

We consider static, uncharged, asymptotically AdS black holes in $D+1$ bulk
spacetime dimensions with spherical, planar, or hyperbolic horizon topology,
which we collectively refer to as AdS-Schwarzschild black holes, even though
the horizon topology may be non-spherical. The dual CFT lives on a spatial manifold of constant curvature with radius $R$ and volume
\begin{equation} \label{volumecft}
V=\Omega_{k,D-1}R^{D-1},
\end{equation}
where $\Omega_{k,D-1}$ denotes the dimensionless volume of the unit constant-curvature
space. The parameter $k=1,0,-1$ corresponds to spherical, planar, and hyperbolic
geometry, respectively.   In this section we keep $k$ general. As in our earlier work \cite{Lilani:2025gzt}, we restrict to black holes with positive
heat capacity, so that the cycle operates as a heat engine rather than a refrigerator.

We parametrize the AdS-Schwarzschild solution in terms of the dimensionless horizon radius $x \equiv r_h / R, $
with $r_h$ the horizon radius. The thermodynamic
variables of the dual CFT are then given by \cite{Visser:2021eqk,Ahmed:2023snm}
\begin{align}
S &= 4\pi C x^{D-1},\qquad C=\frac{\Omega_{k,D-1}L^{D-1}}{16\pi G}\,, \label{Sblackhole}\\
E &= \frac{(D-1)C x^{D-2}}{R}\left(k+x^2\right),\\
T &= \frac{D-2}{4\pi R x}\left(k+\frac{D}{D-2}x^2\right),\\
P &= \frac{C x^{D-2}}{\Omega_{k,D-1}R^D}\left(k+x^2\right)\,. \label{Pblackhole}
\end{align}
Here, $C$ is the central charge of the dual CFT and $L$ is the AdS curvature radius. It is straightforward to check that these quantities satisfy the conformal equation of state \eqref{7.1}.
As in the previous section, the dependence on $R$ is fixed by conformal invariance, so
that $ER$ and $TR$ are dimensionless. \\

\noindent \textbf{In absence of regeneration.} For a generic CFT the   efficiency of a Stirling engine without regenerator is given by \eqref{7.10}. For CFT thermal states dual to AdS-Schwarzschild we can express the entropy and pressure exactly in terms of  $(T,V)$. We first solve the temperature relation for $x$:
\begin{equation}
Dx^2-4\pi TR\,x + (D-2)k =0 \,.
\end{equation}
Choosing the positive branch gives
\begin{equation}
x(T,V)
=
\frac{2\pi TR}{D}
\left[
1+\sqrt{
1-\frac{kD(D-2)}{4\pi^2T^2R^2}
}
\right]\,.
\label{xexpression}
\end{equation}
  This large black hole solution has positive heat capacity, as opposed to the small black hole solution, with a negative sign in front of the square root, which has a negative heat capacity. Note that for planar and hyperbolic neutral black holes there are no small black hole solutions with positive horizon radius. 
Substituting $R=R(V)$  from \eqref{volumecft}, and proceeding   as in the previous
section, the non-regenerative Stirling efficiency again takes the form
\begin{equation}
\eta_{\text{non-regen}}^{\text{CFT}}
=
1-\frac{
T_{\rm c}^{D}(V_{2}\xi_{12}-V_{1}\xi_{11})
+\frac{D-1}{D}V_{2}(T_{\rm h}^{D}\chi_{22}-T_{\rm c}^{D}\chi_{12})
}{
T_{\rm h}^{D}(V_{2}\xi_{22}-V_{1}\xi_{21})
+\frac{D-1}{D}V_{1}(T_{\rm h }^{D}\chi_{21}-T_{\rm c}^{D}\chi_{11})
}\,,
\label{eq:holo_nonregen_general_k}
\end{equation}
but now the functions $\xi_{ij}$ and $\chi_{ij}$ are known exactly. Keeping $k$ general,
they are
\begin{align}
\xi_{ij}
&=
\frac{1}{2^{D-1}}
\left[
1+\sqrt{
1-\frac{kD(D-2)}{4\pi^2T_i^2}
\left(\frac{\Omega_{k,D-1}}{V_j}\right)^{\frac{2}{D-1}}
}
\right]^{D-1}\,,
\label{eq:holo_xi_general_k}\\
\chi_{ij}
&=
\xi_{ij}^{\frac{D-2}{D-1}}
\left[
\xi_{ij}^{\frac{2}{D-1}}
+\frac{kD^2}{16\pi^2T_i^2}
\left(\frac{\Omega_{k,D-1}}{V_j}\right)^{\frac{2}{D-1}}
\right]\,.
\label{eq:holo_chi_general_k}
\end{align}
For $k=1$ these reduce to the spherical expressions derived in \cite{Lilani:2025gzt}, while for
$k=0$ one simply has $\xi_{ij}=1=\chi_{ij}$.\\

\noindent \textbf{In presence of regeneration.} We now turn to the new result, namely the Stirling efficiency in the presence of a
regenerator.  For a generic CFT the corresponding energy-balance expression is given by~\eqref{7.27b},  so we only need to insert the exact holographic functions $S(T,V)$ and $P(T,V)$ that follow from inserting \eqref{xexpression} into \eqref{Sblackhole} and \eqref{Pblackhole}.
This gives
\begin{equation}
\eta_{\text{regen}}^{\text{CFT}}
=
1-\frac{
T_{\rm c}^D(V_2\xi_{12}-V_1\xi_{11})
+\frac{D-1}{D}\left[
V_2(T_{\rm  h}^D\chi_{22}-T_{\rm c}^D\chi_{12})
-
V_1(T_{\rm h }^D\chi_{21}-T_{\rm c}^D\chi_{11})
\right]
}{
T_{\rm h }^D(V_2\xi_{22}-V_1\xi_{21})
}\,,
\label{eq:holo_regen_general_k}
\end{equation}
where the functions $\xi_{ij}$ and $\chi_{ij}$ are given above in
\eqref{eq:holo_xi_general_k} and \eqref{eq:holo_chi_general_k}.
\textcolor{black}{This is the exact regenerative energy-balance expression for a
holographic CFT state dual to an AdS-Schwarzschild black hole. As discussed in
section~\ref{sec:2.1}, it gives the reversible regenerative efficiency directly
when the local isochoric mismatch has the same sign throughout the temperature
interval.\footnote{For spherical horizons, $k=1$, the positive-heat-capacity
branch begins at $x^2=(D-2)/D$, whereas the black hole becomes thermodynamically
dominant only above the Hawking-Page transition at $x=1$. In the intermediate
subdominant region, $\sqrt{(D-2)/D}<x<1$, the heat capacity is non-monotonic and
the local isochoric mismatch can change sign. On the dominant branch $x\geq1$,
the heat capacity increases with $x$, so that for $V_2>V_1$ the local mismatch
has a fixed positive sign.}
The second term in the numerator represents the mismatch between the heat
released along $2\to3$ and that absorbed along $4\to1$. This mismatch is
generically nonzero, so the regenerator does not generally recycle all of
the isochoric heat, and the regenerative Stirling efficiency does not
generally reduce to the Carnot value.}

Finally, when $k=0$ the exact holographic expressions reduce to the planar thermal CFT
result \eqref{7.34}. For $k=\pm 1$, \eqref{eq:holo_xi_general_k} and \eqref{eq:holo_chi_general_k}
capture the exact finite-curvature corrections.

\subsection{Thermal state dual to AdS-Reissner-Nordstr\"{o}m black hole}
\label{subsec:stirling_equal_to_carnot}
Next, we consider a thermal, charged CFT state dual to a charged, static, spherically symmetric AdS black hole solution to Einstein-Maxwell  theory, i.e. an AdS-Reissner-Nordstr\"{o}m black hole. 
For these black holes the first law   contains an additional potential-charge term
\begin{equation}\label{firstlawenergywithcharge}
dE=\delta Q-PdV+\tilde\Phi\,d\tilde{\mathcal Q}\,.
\end{equation}
We denote boundary quantities with tildes, $\tilde{\mathcal{Q}}$ and $\tilde \Phi$, while the corresponding bulk quantities $\mathcal{Q}$ and $\Phi$ are without tildes, following the convention in \cite{Visser:2021eqk,Cong:2021jgb,Ahmed:2023snm}. 
We work below in the grand canonical ensemble, in which the temperature $T$, the volume $V$ and the electric potential $\tilde \Phi$ in the CFT are the independent variables. 
We will show   that, in this ensemble, the holographic Stirling engine can attain the Carnot efficiency    in the limit $\tilde \Phi \to \infty$, in contrast to the neutral case.  

\begin{figure}[tbp]
    \centering
    \begin{subfigure}[t]{0.45\textwidth}
        \centering
        \includegraphics[width=\textwidth]{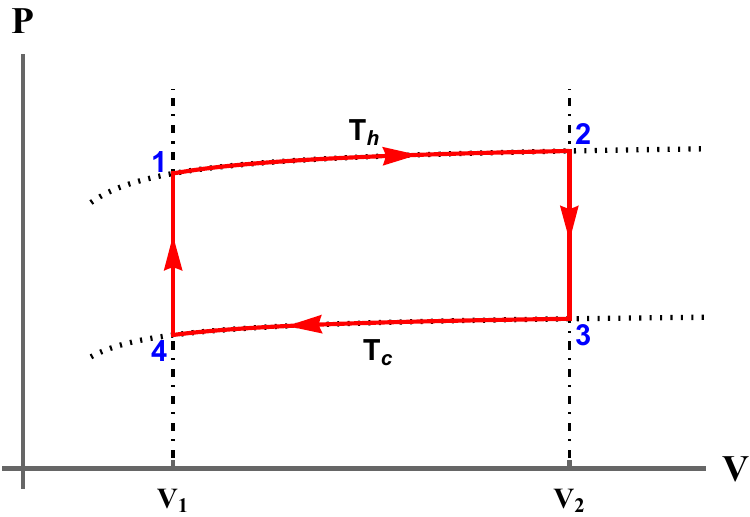} 
        \caption{Stirling cycle for an uncharged holographic CFT, where the   working substance is dual to an AdS-Schwarzschild black hole.}
        \label{fig:sub1}
    \end{subfigure}
    \hfill
    \begin{subfigure}[t]{0.45\textwidth}
        \centering
        \includegraphics[width=\textwidth]{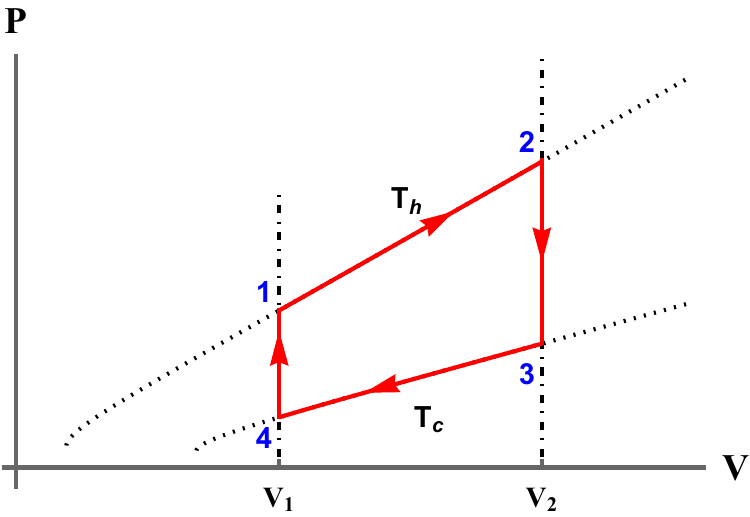} 
        \caption{Stirling cycle for a charged holographic CFT in the fixed $\tilde \Phi$ ensemble, where the   working substance is dual to an AdS-Reissner-Nordstr\"{o}m black hole.}
        \label{fig:sub2}
    \end{subfigure}

    \caption{$PV$-diagrams for holographic Stirling engines. $D=4$ and $k=1$ for these plots. }
    \label{fig:overall}
\end{figure}

\begin{figure}[t]
    \centering
    \includegraphics[width=0.5\textwidth]{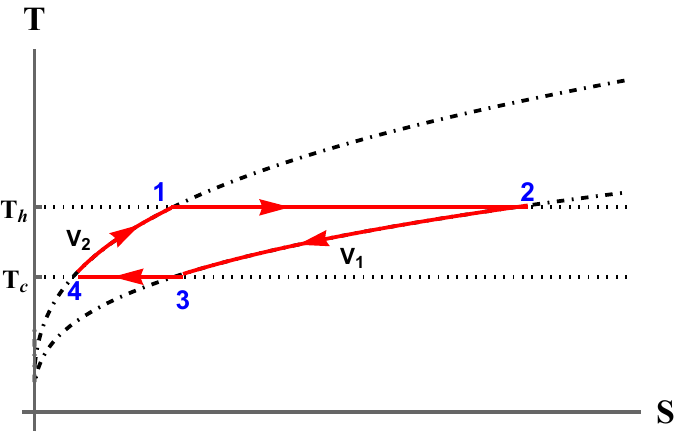}
    \caption{$TS$-diagram of a Stirling cycle for a charged holographic CFT in the fixed $\tilde{\Phi}$ ensemble. The corresponding diagram for an uncharged holographic CFT is qualitatively identical. $D=4$ and $k=1$ for this plot. }
    \label{}
\end{figure}

We first recall the holographic dictionary for a $(D+1)$-dimensional AdS-Reissner-Nordstr\"{o}m black hole \cite{Cong:2021jgb}. The CFT thermodynamic quantities are expressed in terms of the dimensionless parameter $x \equiv r_h/L$, the boundary curvature radius $R$, and the electric potential $\tilde \Phi$:
\begin{align}
\label{eqn:holo_cft_entro}
    S &= 4 \pi C x^{D-1}, \qquad  V = \Omega_{k,D-1} R^{D-1}\,,\\ 
\label{eqn:holo_cft_energy}
    E &= \frac{(D-1)Cx^{D-2}}{R}\left (k + x^{2} + \alpha^2 \tilde \Phi^2 R^2\right)\,,\\
\label{eqn:holo_temp} 
    T &= \frac{D-2}{4 \pi R x}\left (k + \frac{D}{D-2}x^{2} - \alpha^2 \tilde \Phi^2 R^2\right)\,,\\
     P &=\frac{C x^{D-2}}{\Omega_{k,D-1}R^D} \left(k + x^{2} + \alpha^2 \tilde \Phi^{2}R^{2} \right )\,,\\
       \tilde{\mathcal{Q}} &= 4 (D-2) C x^{D-2} \tilde \Phi R\,,
\end{align}
where 
\begin{equation}
    \alpha^2 = \frac{ 2(D-2)}{ D-1} \,,
\end{equation} $C$ is the central charge in \eqref{Sblackhole} and $k$ is the     horizon topology  parameter. These CFT quantities are related to the   black hole thermodynamic variables via
\begin{equation}
    S = \frac{A}{4G_N}\, \quad E = M \frac{L}{R}\,, \quad T = \frac{\kappa}{2\pi} \frac{L}{R}\,, \quad \tilde{\Phi} = \frac{\Phi}{R}\,, \quad \tilde{\mathcal{Q}} = \mathcal{Q} L\,,
\end{equation}
where $A$ is the horizon area, $M$ is the mass of the AdS-Reissner-Nordstr\"{o}m black hole, $\kappa$ is the surface gravity, $\Phi$ is the electric potential and $\mathcal Q$ is the electric charge.

Since we work in the grand canonical ensemble we need to solve for $x=x(T,R,\tilde \Phi)$. Choosing the large black hole with positive heat capacity yields
\begin{equation}
\label{eqn: x(T,R,phi)}
x =\frac{1}{D}\left ( 2 \pi   TR+\sqrt{  4 \pi^2 T^2 R^2 - D(D-2)(k - \alpha^2 \tilde \Phi^2 R^2)}\right)\,.
   \end{equation}
As expected for a CFT, this depends only on the scale-invariant combinations $TR$ and $\tilde{\Phi}R$. Substituting into the expressions above and using $R=R(V)$ yields all thermodynamic quantities as functions of $(T,V,\tilde\Phi)$.\\

\noindent \textbf{In absence of regeneration.} To compute the Stirling efficiency, we   need to understand the direction of heat flow along the various paths of the Stirling engine. Along the isotherm at higher temperature, the heat exchange is   $Q_{1 \rightarrow 2} = T_{\rm h}(S_2 - S_1)$. Since $S \propto x^{D-1}$, it is easy to see from \eqref{eqn: x(T,R,phi)} that $x_2 > x_1$ because $V_2 > V_1$. This implies that $Q_{1 \rightarrow 2} > 0$ and thus the heat flows into the system along the path $1 \rightarrow 2$. Similarly, since $x_3 > x_4$, $Q_{3 \rightarrow 4}= T_{\rm c} (S_4 - S_3)< 0$ and thus there is heat loss along the path $3 \rightarrow 4$. This implies that unlike in the fixed-charge ensemble, in the fixed-potential ensemble, there is no other possibility than the conventional heat exchange along the two isotherms. 

The heat exchange along the isochores is more interesting. From the conformal
equation of state~\eqref{7.1} and the first law~\eqref{firstlawenergywithcharge},
the heat exchange along an isochore is
\begin{equation}
Q=\int TdS
=(D-1)V\Delta P-\tilde{\Phi}\,\Delta\tilde{\mathcal Q}\,.
\end{equation}
Note that the charge is allowed to vary in the fixed-potential ensemble. Hence,
\begin{align}
Q_{2 \rightarrow3} &= (D-1)V_2(P_3 - P_2) - \tilde{\Phi} (\tilde{\mathcal{Q}}_3 - \tilde{\mathcal{Q}}_2)\,, \\
Q_{4\rightarrow1} &= (D-1)V_1(P_1 - P_4) - \tilde{\Phi} (\tilde{\mathcal{Q}}_1 - \tilde{\mathcal{Q}}_4)\,.
\end{align}
\textcolor{black}{Since we work on the branch with positive
$C_{V,\tilde\Phi}$, the direction of heat flow along the isochores follows
directly:
\begin{equation}
Q_{2\to3}
=
-\int_{T_{\rm c}}^{T_{\rm h}}
C_{V,\tilde\Phi}(T,V_2)\,dT<0,
\qquad
Q_{4\to1}
=
\int_{T_{\rm c}}^{T_{\rm h}}
C_{V,\tilde\Phi}(T,V_1)\,dT>0.
\end{equation}
Thus, heat is released during isochoric cooling and absorbed during
isochoric heating.}

Substituting $P(x,R,\tilde{\Phi})$ and $\tilde{\mathcal{Q}}(x,R, \tilde{\Phi})$
gives
\begin{equation}
    \label{eqn: q_23}
    Q_{2 \rightarrow 3} = \frac{(D-1)C}{R_2}  \left[(x_2^{D-2}-x_3^{D-2})(\alpha^2\tilde{\Phi}^{2}R_{2}^{2}-k)-(x_2^D-x_3^D)\right]\,,
\end{equation}
\begin{equation}
    \label{eqn: q_14}
    Q_{4 \rightarrow 1} = \frac{(D-1)C}{R_1}   \left[(x_1^D-x_4^D)-(x_1^{D-2}-x_4^{D-2})(\alpha^2\tilde{\Phi}^{2}R_{1}^{2}-k)\right]\,.
\end{equation}
In the conventional case $Q_{\rm in} = |Q_{1 \rightarrow 2}| + |Q_{4 \rightarrow 1}|$
and $Q_{\rm out} = |Q_{2 \rightarrow 3}| + |Q_{3 \rightarrow 4}|$. Substituting $x(T,R,\tilde{\Phi})$ into these expressions, we arrive at the following efficiency for the nonregenerative Stirling engine  
\begin{equation}
    \label{eff_in_out}
    \eta_{\rm non\text{-}regen}^{\rm CFT} = 1 - \frac{T_{\rm c}^D(V_2 \xi_{12} -V_1\xi_{11})+ V_2\frac{D-1}{D}(T_{\rm h}^D\chi_{22}-T_{\rm c}^D\chi_{12})}{T_{\rm h}^D(V_2 \xi_{22} -V_1\xi_{21})+ V_1\frac{D-1}{D}(T_{\rm h}^D\chi_{21}-T_{\rm c}^D\chi_{11})}\,,
\end{equation}
where $\xi_{ij}$ and $\chi_{ij}$ are  
\begin{align}
   \label{eqn: xi_charged_bh}
\xi_{ij}
&=
\frac{1}{2^{D-1}}
\left[
1+
\sqrt{
1+D(D-2)\left(
\frac{\alpha^2\tilde{\Phi}^{\,2}}{4\pi^2T_i^2}
-\frac{k}{4\pi^2T_i^2}
\left(\frac{\Omega_{k,D-1}}{V_j}\right)^{\!\frac{2}{D-1}}
\right)
}
\right]^{D-1},\\
\chi_{ij}
&=
\xi_{ij}^{\frac{D-2}{D-1}}
\left[
\xi_{ij}^{\frac{2}{D-1}}
+\frac{D^2}{16\pi^2T_i^2}
\left(
k\left(\frac{\Omega_{k,D-1}}{V_j}\right)^{\!\frac{2}{D-1}}
 -\alpha^2\tilde\Phi^{\,2}
\right)
\right].
   \label{eqn:chi_charged_bh}
\end{align}
For $\tilde \Phi=0$ these functions agree with our expressions in the case of AdS-Schwarzschild.\\

\noindent \textbf{In presence of regeneration.}
\textcolor{black}{For a regenerative Stirling engine in the fixed-$\tilde\Phi$
ensemble, the efficiency depends on the mismatch between the two isochoric
heat exchanges. In the fixed-sign regime relevant at sufficiently large
$\tilde\Phi$, the heat released during isochoric cooling exceeds that absorbed
during isochoric heating. In terms of the functions introduced above, this
condition is}
\begin{equation}
T_{\rm h}^D\bigl(V_2 \chi_{22}-V_1 \chi_{21}\bigr) 
> 
T_{\rm c}^D\bigl(V_2 \chi_{12}-V_1\chi_{11}\bigr)\,.
\end{equation}
As a consequence, the heat stored during the isochoric cooling step cannot be fully reutilized during the subsequent isochoric heating step. \textcolor{black}{For the process to remain reversible, the excess heat must be
rejected to external reservoirs at the temperatures at which it arises.
When the local mismatch has the same sign throughout the temperature interval,
the total excess heat is $Q_{\rm mis}$, so that}
$ 
Q_{\text{out}} = |Q_{3\to4}| + Q_{\text{mis}}.
$
Further, we saw that, in the fixed $\tilde{\Phi}$ ensemble, the heat exchange along the isotherms follows the conventional pattern. Therefore the regenerative energy-balance efficiency satisfies the same formula as for a generic CFT, cf. \eqref{eq:holo_regen_general_k}, 
\begin{equation} \label{regencftchargedeff}
\eta_{\rm regen}^{\rm CFT} 
= 1 - \frac{T_{\rm c}^D(V_2 \xi_{12} -V_1\xi_{11})+ \frac{D-1}{D}\left[V_2(T_{\rm h}^D\chi_{22}-T_{\rm c}^D\chi_{12})-V_1(T_{\rm h}^D\chi_{21}-T_{\rm c}^D\chi_{11})\right]}{T_{\rm h}^D(V_2 \xi_{22} -V_1\xi_{21})}\,,
\end{equation}
but now   $\xi_{ij}$ and $\chi_{ij}$ are given exactly by the expressions \eqref{eqn: xi_charged_bh} and \eqref{eqn:chi_charged_bh}  above.


Finally, we turn to the main result of this section. Unlike classical
working substances such as ideal gases or Van der Waals fluids, a charged
holographic CFT does not exhibit perfect cancellation of heat along the
isochores, so that $Q_{\rm mis}\neq0$. Nevertheless, both the regenerative
and non-regenerative efficiencies approach the Carnot efficiency in the
infinite potential limit, i.e.,
\begin{equation}
\eta_{\rm (non\text{-})regen}^{\rm CFT}
\to
\eta_{\rm Carnot}
\qquad\text{as}\qquad
\tilde\Phi\to\infty\,.
\end{equation}
\textcolor{black}{As we show below, both efficiencies have corrections of order
$\mathcal O(\tilde\Phi^{-1})$, but the coefficient of the leading correction
is smaller in the presence of regeneration. Regeneration therefore reduces
the leading deviation from the Carnot value. This is also shown in Figure~\ref{fig:eta_vs_phi}.}

\begin{figure}[t]
    \centering
    \includegraphics[width=0.5\textwidth]{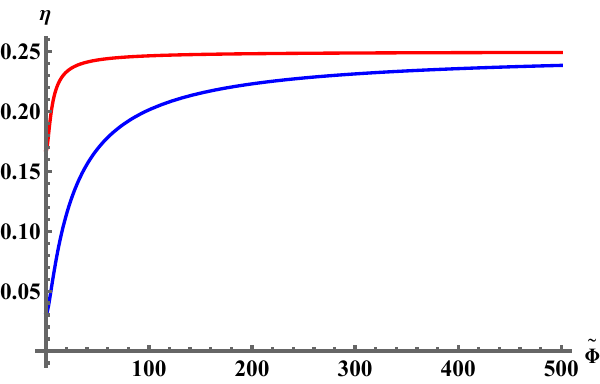}
    \caption{Stirling efficiency as a function of the electric potential  for charged holographic CFTs, at fixed $T_{\rm h}$, $T_{\rm c}$, $V_2$, and $V_1$. Here $T_{\rm h} = 2$, $T_{\rm c}=1.5$, $V_2=1$, $V_1=0.9$ and $D=4$. The red and blue lines correspond to the presence and absence of regeneration, respectively.  }
    \label{fig:eta_vs_phi}
\end{figure}

\textcolor{black}{We now prove these statements analytically. It is useful to define
\begin{equation}
\beta_D\equiv
\alpha\sqrt{\frac{D-2}{D}}\,,
\end{equation}
together with
\begin{equation}
\mathcal B_D\equiv
\frac{4\pi C}{\Omega_{k,D-1}}\beta_D^{D-1}\,,
\qquad
\mathcal A_D\equiv
\frac{8\pi^2C(D-1)}{D\,\Omega_{k,D-1}}\beta_D^{D-2}\,.
\end{equation}
At fixed $T$ and $R$, the large-potential expansion of
$x(T,R,\tilde\Phi)$ is
\begin{equation}
x
=
R\left(
\beta_D\tilde\Phi+\frac{2\pi T}{D}
+\mathcal O(\tilde\Phi^{-1})
\right)\,.
\end{equation}
Using $S=4\pi Cx^{D-1}$ and
$V=\Omega_{k,D-1}R^{D-1}$, it follows that
\begin{equation}\label{entropy_large_phi}
S(T,V,\tilde\Phi)
=
\mathcal B_D V\tilde\Phi^{D-1}
+\mathcal A_D TV\tilde\Phi^{D-2}
+\mathcal O(\tilde\Phi^{D-3})\,.
\end{equation}
The heat exchanges along the isotherms are therefore
\begin{align}
Q_{1\to2}
&=
\mathcal B_D T_{\rm h}(V_2-V_1)\tilde\Phi^{D-1}
+\mathcal A_D T_{\rm h}^2(V_2-V_1)\tilde\Phi^{D-2}
+\mathcal O(\tilde\Phi^{D-3})\,,
\\
\left|Q_{3\to4}\right|
&=
\mathcal B_D T_{\rm c}(V_2-V_1)\tilde\Phi^{D-1}
+\mathcal A_D T_{\rm c}^2(V_2-V_1)\tilde\Phi^{D-2}
+\mathcal O(\tilde\Phi^{D-3})\,.
\end{align}
Equation~\eqref{entropy_large_phi} also gives
\begin{equation} \label{newheatcaphere}
C_{V,\tilde\Phi}(T,V)
=
\mathcal A_D TV\tilde\Phi^{D-2}
+\mathcal O(\tilde\Phi^{D-3})\,,
\end{equation}
and hence the isochoric heat exchanges are
\begin{align}
Q_{\rm in}^{4\to1}
&=
\frac{\mathcal A_D}{2}V_1
\left(T_{\rm h}^2-T_{\rm c}^2\right)
\tilde\Phi^{D-2}
+\mathcal O(\tilde\Phi^{D-3})\,,
\\
Q_{\rm out}^{2\to3}
&=
\frac{\mathcal A_D}{2}V_2
\left(T_{\rm h}^2-T_{\rm c}^2\right)
\tilde\Phi^{D-2}
+\mathcal O(\tilde\Phi^{D-3})\,.
\end{align}
Since $V_2>V_1$, the heat released during isochoric cooling is larger
than that absorbed during isochoric heating. Thus,
\begin{equation} \label{eq:Qmis_large_phi}
Q_{\rm mis}
=
\frac{\mathcal A_D}{2}(V_2-V_1)
\left(T_{\rm h}^2-T_{\rm c}^2\right)
\tilde\Phi^{D-2}
+\mathcal O(\tilde\Phi^{D-3})\,.
\end{equation}
Substituting these expansions into
\eqref{eff_in_out} and \eqref{regencftchargedeff} gives
\begin{align}
\eta_{\rm regen}^{\rm CFT}
&=
1-\frac{T_{\rm c}}{T_{\rm h}}
-\frac{\Gamma_{\rm regen}}{\tilde\Phi}
+\mathcal O(\tilde\Phi^{-2})\,,
\\
\eta_{\rm non\text{-}regen}^{\rm CFT}
&=
1-\frac{T_{\rm c}}{T_{\rm h}}
-\frac{\Gamma_{\rm non\text{-}regen}}{\tilde\Phi}
+\mathcal O(\tilde\Phi^{-2})\,,
\end{align}
where
\begin{equation}
\Gamma_{\rm regen}
=
\frac{\mathcal A_D}{2\mathcal B_D}
\frac{(T_{\rm h}-T_{\rm c})^2}{T_{\rm h}}\,,
\quad \text{and}\quad
\Gamma_{\rm non\text{-}regen}
=
\Gamma_{\rm regen}
\left[
1+
\frac{V_1}{V_2-V_1}
\frac{T_{\rm h}+T_{\rm c}}{T_{\rm h}}
\right]\,.
\end{equation}
Consequently,
\begin{equation}
\Gamma_{\rm non\text{-}regen}-\Gamma_{\rm regen}
=
\frac{\mathcal A_D}{2\mathcal B_D}
\frac{V_1}{V_2-V_1}
\frac{(T_{\rm h}-T_{\rm c})^2(T_{\rm h}+T_{\rm c})}
{T_{\rm h}^2}
>0\,.
\end{equation}
Thus, both efficiencies approach the Carnot value with corrections of
order $\mathcal O(\tilde\Phi^{-1})$, but the leading deviation is smaller
for the regenerative engine.}

\textcolor{black}{
For completeness, the exact heat capacity in the fixed-potential ensemble
is
\begin{equation}
\begin{aligned}
C_{V,\tilde\Phi}
=
T\left(\frac{\partial S}{\partial T}\right)_{V,\tilde\Phi}
&=
4\pi C(D-1)x^{D-1}
\frac{(D-2)\bigl(k-\alpha^2\tilde\Phi^2R^2\bigr)+Dx^2}
{Dx^2-(D-2)\bigl(k-\alpha^2\tilde\Phi^2R^2\bigr)}
\\
&=
\frac{8\pi^2C(D-1)TR\,x^{D-1}}
{Dx-2\pi TR}\,.
\end{aligned}
\end{equation}
Its large-potential expansion agrees with the result \eqref{newheatcaphere} used above. The heat capacity does not become independent of the volume in the infinite-$\tilde \Phi$ limit; 
rather, it scales linearly with $V$.}


\textcolor{black}{
Further, the entropy mismatch between the two isochoric processes is
\begin{equation}
\begin{aligned}
\Delta S_{\rm mis}
&\equiv (S_2-S_3)-(S_1-S_4) \\
&=
\int_{T_{\rm c}}^{T_{\rm h}}
\frac{
C_{V,\tilde\Phi}(T,V_2)
-
C_{V,\tilde\Phi}(T,V_1)
}{T}\,dT \\
&=
\mathcal A_D(V_2-V_1)(T_{\rm h}-T_{\rm c})
\tilde\Phi^{D-2}
+\mathcal O(\tilde\Phi^{D-3})\,.
\end{aligned}
\end{equation}
This entropy mismatch and the heat mismatch $Q_{\rm mis}$ arise from the
same local difference between the two isochoric heat exchanges. At large
$\tilde\Phi$, the heat released during isochoric cooling exceeds that
required during isochoric heating at every temperature, so the local
mismatch does not cancel upon integration. The corresponding total surplus
heat is therefore the $Q_{\rm mis}$ obtained in
equation~\eqref{eq:Qmis_large_phi}.
For the regenerator to return reversibly to its initial state, this surplus
heat must be rejected externally at the temperatures at which it arises.
The quantity $\Delta S_{\rm mis}$ gives the associated entropy balance, but
does not introduce any additional heat beyond the $Q_{\rm mis}$ already
included in~\eqref{regencftchargedeff}.}

\textcolor{black}{Since $Q_{\rm mis}$ is of order $\tilde\Phi^{D-2}$, whereas the leading
isothermal heat is of order $\tilde\Phi^{D-1}$, the mismatch produces only a
relative correction of order $\mathcal O(\tilde\Phi^{-1})$. It therefore does
not change the limiting efficiency:
 $ 
\eta\to 1- T_{\rm c}/T_{\rm h}
.
$
Although $C_{V,\tilde\Phi}$ remains proportional to $V$ in the
large-$\tilde\Phi$ limit, this does not contradict the theorem in
section~\ref{sec:stirlingtheorem}. That theorem assumes fixed conserved
charges, whereas here $\tilde\Phi$ is fixed and the charge varies along
the cycle.}



\section{Discussion and Outlook}

In this work, we have been interested in constructing heat engines that make use of black hole thermodynamics via the holographic correspondence. Such constructions are particularly useful for strongly coupled systems, where thermodynamic quantities are not accessible from field theory and must instead be computed using the dual gravitational description. In this sense, our approach is similar in spirit to the use of holography in
identifying universal properties of strongly coupled systems, such as the shear
viscosity to entropy density ratio, as illustrated by the   KSS bound
$\eta/s \geq 1/4\pi$ \cite{Kovtun:2004de}. At the same time, the engines considered here rely on a number of idealizations, and it is therefore important to examine how the results are modified once these assumptions are relaxed.

An important extension of our analysis is to relax the assumption of
reversibility. In realistic engines, heat exchange between the working
substance and the reservoirs occurs across finite temperature differences,
which necessarily leads to entropy production and dissipation. A simple
framework to incorporate such effects is endoreversible thermodynamics \cite{Lavenda}, in
which the working substance remains in local equilibrium while irreversibility
is attributed to the heat transfer between the working substance and the reservoirs. This
shifts the emphasis from reversible efficiency to performance at finite time,
in particular to efficiency at maximum power, as in the Curzon-Ahlborn
efficiency \cite{CurzonAhlborn}.

Another direction is to relax the assumption of an ideal regenerator. Realistic regenerators are characterized by an effectiveness $\varepsilon_R < 1$, so that only a fraction of the heat exchanged during the isochoric processes is recovered. This leads to additional heat exchange with the external reservoirs and a corresponding reduction in efficiency. It would be useful to incorporate such effects into the present framework and to disentangle the contributions to the heat flows arising from non-ideal regeneration from the intrinsic mismatch~$Q_{\mathrm{mis}}$.

A further extension is to broaden the class of working substances. On the gravitational side, one can include rotation or higher-curvature corrections, the latter corresponding to $1/N$ effects in the dual field theory. It is also useful to consider different thermodynamic ensembles. In particular, charged black holes in the fixed charge ensemble provide a counterpart to the fixed-potential ensemble studied here.

On the field theory side, one may consider different geometries and scaling symmetries. In the present work, spherical working substances arise from thermal states dual to spherical black holes. A direct generalization is to consider compact working substances  with different shapes, such as a cube on the plane, which allows one to isolate finite-volume effects beyond the spherical case. It is also of interest to extend the analysis to holographic field theories with non-relativistic scaling symmetries, such as Lifshitz scaling, where the thermodynamic relations are modified.

Furthermore, it would be interesting to construct holographic heat engines for
dual thermal systems defined on finite timelike boundaries of black hole
spacetimes, rather than for CFTs at asymptotic infinity. In particular, the
recent holographic definition~\cite{Borsboom:2026ash} of pressure and volume
provides a framework for a broad class of black holes, with asymptotically flat
Schwarzschild as the simplest example, in which the dual theory lives on the
boundary. In this setting, the thermodynamic state space is non-degenerate, so
that entropy and volume are independent variables. This allows one to define
holographic heat engines beyond the AdS/CFT setting, and may provide new insight
into how the black hole geometry evolves during a thermodynamic cycle.

\textcolor{black}{Finally, our results suggest a connection between the
isochoric heat mismatch and phase structure. For the ideal Bose gas, the
direction of the mismatch reverses between the normal and Bose-Einstein
condensed phases: above the critical temperature,
$Q_{\rm out}^{2\to3}<Q_{\rm in}^{4\to1}$, whereas below it,
$Q_{\rm out}^{2\to3}>Q_{\rm in}^{4\to1}$. Equivalently, the signed
difference $Q_{\rm out}^{2\to3}-Q_{\rm in}^{4\to1}$ changes sign across
the transition.   It would be interesting to
investigate whether similar behavior occurs in other systems with phase
transitions.}

More broadly, these de-idealizations may also make other thermodynamic cycles more interesting. In previous work, cycles such as the Otto, Diesel, Brayton, and rectangular $PV$-cycles have efficiencies that are the same for all CFTs, being fixed entirely by the equation of state. Once the idealizations of our setup are relaxed, this universality  may no longer   hold, and these cycles may acquire a nontrivial dependence on the thermodynamic data of the system, providing additional probes of strongly coupled systems via holography.\\

\noindent \textbf{Acknowledgements.} MRV thanks the audiences of
the 2025 Peyresq Spacetime Meeting, the National Seminar Theoretical High Energy Physics
at NIKHEF, the IMAPP meeting at Radboud University and the physics colloquium at Bilkent University, where this work was presented,
for their interesting questions. He is also grateful to Bilkent University, and in particular to Bayram Tekin, for their generous hospitality at the final stages of this project. MRV is supported by the Spinoza Grant of the Dutch
Science Organization (NWO) awarded to Klaas Landsman. NL thanks the Oxford and Cambridge Society of India (OCSI) for their generous scholarship support and Hughes Hall, Cambridge for the financial support.

\appendix

\section{Isotherm and isochore equations}
In this appendix, we briefly describe how the $PV$- and $TS$-diagrams are
obtained for the various working substances (see also Appendix B in \cite{Lilani:2025gzt}). To construct the $P V$-cycle  for the Stirling engine it is necessary to
determine the equation for  an isotherm for each working substance. In the $PV$-plane the equation for an isochore is simply $V=\text{constant}.$ For the $TS$-cycle we need the equation for an isochore for each working substance. In this case the equation for an isotherm is, of course, $T=\text{constant}.$ We list the isotherm and isochore equations below for the relevant working substances.\\

\noindent \textbf{Isotherm equations for  various working substances in the $PV$-plane.} 
For a classical ideal gas, the equation for an isotherm at fixed temperature $T_o$ is
\begin{equation}
    P=\frac{NT_o}{V} \,.
\end{equation}
For the Van der Waals fluid 
\begin{equation}
    P = \frac{NT_o}{V-bN} - \frac{a N^2}{V^2} \,.
\end{equation}
For an ideal Bose gas
\begin{equation}
    P = \frac{T_o}{\lambda_{T_{o}}^d} g_{\frac{d}{2}+1}\left[z(V,T_o)\right] \,.
\end{equation}
For an ideal Fermi gas
\begin{equation}
    P = \frac{T_o}{\lambda_{T_{o}}^d} f_{\frac{d}{2}+1}\left[z(V,T_o)\right]\,.
\end{equation}
For a Bose-Einstein condensate
\begin{equation}
    P = A T_o^{d/2+1} = \text{constant}\,,
\end{equation}
where $A=\left(\frac{2\pi m}{h^2}\right)^{d/2}\zeta\!\left(\frac{d}{2} +1\right).$

For a holographic CFT thermal state dual to an AdS-Schwarzschild black hole (see equation (B2) in \cite{Lilani:2025gzt} for a derivation)
\begin{align}
\label{eqn: CFT_PV_isotherm}
    &PV^{\frac{D}{D-1}} = C (\Omega_{k,D-1})^{\frac{1}{D-1}} 
    \left[\frac{1}{D} \left(2 \pi T_o \left(\frac{V}{\Omega_{k,D-1}}\right)^{\frac{1}{D-1}} 
    + \sqrt{ 4 \pi ^{2} T_o^2\left(\frac{V}{\Omega_{k,D-1}}\right)^{\frac{2}{D-1}}- kD(D-2)}\right)\right]^{D-2}\!\!\!\!\!\!\!\!\times \\
    &\quad 
    \times\left[
   k + 
     \left(\frac{1}{D} \left(2 \pi T_o  \left(\frac{V}{\Omega_{k,D-1}}\right)^{\frac{1}{D-1}} + \sqrt{  4 \pi ^{2} T_o^2 \left(\frac{V}{\Omega_{k,D-1}}\right)^{ \frac{2}{D-1}}- kD(D-2)}\right)\right)^{2} 
    \right]\,.\nonumber
\end{align}
For a holographic CFT thermal state dual to an AdS-Reissner-Nordstr\"{o}m  black hole
\begin{align}
\label{eqn: CFT_PV_isotherm}
\hspace{-1cm}
\resizebox{1.2\textwidth}{!}{$
\begin{aligned}
&PV^{\frac{D}{D-1}} = C (\Omega_{k,D-1})^{\frac{1}{D-1}} 
\left[\frac{1}{D} \left(2 \pi T_o \left(\frac{V}{\Omega_{k,D-1}}\right)^{\frac{1}{D-1}} 
+ \sqrt{ 4 \pi ^{2} T_o^2\left(\frac{V}{\Omega_{k,D-1}}\right)^{\frac{2}{D-1}}- D(D-2)\left(k -\alpha^2 \tilde{\Phi}^2 \left(\frac{V}{\Omega_{k,D-1}}\right)^{\frac{2}{D-1}}\right)}\right)\right]^{D-2}\!\!\!\!\!\!\!\!\times \\
&\quad
\times\left[
 k  + 
 \left(\frac{1}{D} \left(2 \pi T_o  \left(\frac{V}{\Omega_{k,D-1}}\right)^{\frac{1}{D-1}} + \sqrt{  4 \pi ^{2} T_o^2 \left(\frac{V}{\Omega_{k,D-1}}\right)^{ \frac{2}{D-1}}- D(D-2)\left(k -\alpha^2 \tilde{\Phi}^2 \left(\frac{V}{\Omega_{k,D-1}}\right)^{\frac{2}{D-1}}\right)}\right)\right)^{2} 
 +\alpha^2 \tilde{\Phi}^2 \left(\frac{V}{\Omega_{k,D-1}}\right)^{\frac{2}{D-1}} 
\right]\,.
\end{aligned}
$}
\end{align}
\textbf{Isochore equations for various  working substances in the $TS$-plane.} 
For a classical ideal gas the equation for an isochore at fixed volume $V_o$ is
\begin{equation}
    \label{eqn:TS_IG_Isochore}
    T = \exp\left[\frac{(\gamma - 1)S}{ N} - \gamma \right] \left(\frac{V_o}{N}\right)^{-(\gamma - 1)}\frac{1}{2\pi m}  \,,
\end{equation}
where $m$ is the mass of the gas particle and $\gamma \equiv C_P/C_V$ is the ratio of heat capacities, which is related to the number of degrees of freedom $f$ by $\gamma = 1+2/f$ for an ideal gas.

\noindent For the Van der Waals fluid
\begin{equation}
    \label{eqn:TS_IG_Isochore}
    T = \exp\left[\frac{(\gamma - 1)S}{ N} - \gamma \right] \left(\frac{V_o-bN}{N}\right)^{-(\gamma - 1)}\frac{1}{2\pi m}  \,.
\end{equation}
For a Bose-Einstein condensate
\begin{equation}
    \label{eqn:BEC_isochore}
    T = \left(\frac{S}{A(d/2+1)V_o}\right)^{\frac{2}{d}}\,.
\end{equation}
For  a holographic CFT thermal state dual to an AdS-Schwarzschild black hole
\begin{equation}
    \label{TS_CFT_Isochore}
     T = \frac{D-2}{4 \pi}\left(\frac{\Omega_{k,D-1}}{V_o}\right)^{\frac{1}{D-1}}\left(\frac{4 \pi C}{S}\right)^{\frac{1}{D-1}}\left ( k  + \frac{D}{D-2}\left(\frac{S}{4 \pi C}\right)^{\frac{2}{D-1}} \right)
 \,.\end{equation}
For a  holographic CFT thermal state dual to an AdS-Reissner-Nordstr\"{o}m black hole
\begin{equation}
    \label{TS_CFT_Isochore}
     T = \frac{D-2}{4 \pi}\left(\frac{\Omega_{k,D-1}}{V_o}\right)^{\frac{1}{D-1}}\left(\frac{4 \pi C}{S}\right)^{\frac{1}{D-1}}\left ( k + \frac{D}{D-2}\left(\frac{S}{4 \pi C}\right)^{\frac{2}{D-1}}-\alpha^2 \tilde{\Phi}^2 \left(\frac{V_o}{\Omega_{k,D-1}}\right)^{\frac{2}{D-1}}\right)
 \,.\end{equation}

\bibliographystyle{JHEP}
\bibliography{refs}

\end{document}